\newtheorem{prop}{Proposition}
\newcounter{parentnumber}
\theoremstyle{definition}
\newtheorem{definition}{Definition}
\newtheorem{remark}{Remark}
\newtheorem{lemma}{Lemma}
\begin{document}
	
	\title{\Huge Realizing Multi-Point Vehicular Positioning via Millimeter-wave Transmission}
	
\author{\IEEEauthorblockN {Zezhong Zhang\IEEEauthorrefmark{1},
		Seung-Woo Ko\IEEEauthorrefmark{2}, Rui Wang\IEEEauthorrefmark{3}, and 
		Kaibin Huang\IEEEauthorrefmark{1} }
	
	\IEEEauthorblockA{
		\IEEEauthorrefmark{1}Department of Electrical and Electronic Engineering, The University of Hong Kong\\
		\IEEEauthorrefmark{2}Division of Electronics and Electrical Information Engineering, Korea Maritime and Ocean University\\
		\IEEEauthorrefmark{3}Department of Electrical and Electronic Engineering, Southern University of Science and Technology\\
		Email:  zzzhang@eee.hku.hk, swko@kmou.ac.kr, wang.r@sustech.edu.cn, huangkb@eee.hku.hk}}

	\maketitle
	
	\begin{abstract}	
		Multi-point detection of the full-scale environment is an important issue in autonomous driving. The state-of-the-art positioning technologies (such as RADAR and LIDAR) are incapable of real-time detection without line-of-sight. To address this issue, this paper presents a novel multi-point vehicular positioning technology via \emph{millimeter-wave} (mmWave) transmission that exploits multi-path reflection from a \emph{target vehicle} (TV) to a \emph{sensing vehicle} (SV), which enables the SV to fast capture both the shape and location information of the TV in \emph{non-line-of-sight} (NLoS) under the assistance of multi-path reflections. A \emph{phase-difference-of-arrival} (PDoA) based hyperbolic positioning algorithm is designed to achieve the synchronization between the TV and SV. The \emph{stepped-frequency-continuous-wave} (SFCW) is utilized as signals for multi-point detection of the TVs. 
		Transceiver separation enables our approach to work in NLoS conditions and achieve much lower latency compared with conventional positioning techniques. 
	\end{abstract}
	
	\section{Introduction}
	
	Autonomous driving has grown into a reality with rapid progress of diverse technologies~\cite{HeathAutoDrivingIntro}. Much effort and investment have been devoted by top automobile companies (Tesla, BMW) and Internet companies (Google, Baidu), leading to significant achievements toward  commercialization. 
	One remaining challenge to attain full autonomous driving is to accurately recognize vehicles nearby, termed \emph{vehicular positioning}~\cite{VehicularPositioning}.
	Compared to conventional single-point positioning approaches~\cite{PPP,MobileCentric}, vehicular positioning is required to accurately estimate high-resolution positioning information including location, size, and shape of \emph{target vehicle} (TV). We call it \emph{multi-point vehicular positioning} in this paper.

	\subsection{Positioning}

	Single-point positioning has been widely used in the area of mobile positioning, which determines the object's position as a single point since a mobile device is typically small enough to mark a representative point.  
	The most common way is to use a built-in \emph{Global Positioning System} (GPS) receiver but its usage is limited since GPS signals are  frequently blocked in urban environments. On the other hand, single-point positioning is unsuitable for autonomous driving due to the fact that a vehicle is too big to represent a single point. The lack of shape and size information may cause fatal accidents~\cite{Accident}. One can claim to be able to deliver shape and size information separately through a reliable communication link but it needs additional efforts to align them with the estimated point perfectly.

	To overcome the limitation of single-point positioning, there have been efforts in the area of multi-point positioning. {Passive multi-point positioning} detects the natural radiation from the target objects without emitting discernible radiations. Infrared sensors and cameras are representative ones, which easily retrieve the positioning information in a fully covert manner \cite{PassiveMMW}.
	On the other hand, \emph{active multi-point positioning} techniques illuminate the target and detect its position with data extracted from reflected signals or lights~\cite{PassiveMMW,BackPropagation}. A \emph{RAdio-Detection-And-Ranging} (RADAR) system is the most popular one in surface and subsurface detections, which is implemented by impulse waveforms and \emph{continuous-waves} (CWs) with different frequency modulations~\cite{SFCW2016}. 
	\emph{LIght-Detection-And-Ranging} (LIDAR), another new emerging active positioning technique, utilizes narrow laser beams for positioning and uses the scanning mirror for fast scanning~\cite{LIDARscanning}.
	
	More importantly, a fatal drawback of the above techniques is that they are only capable of detecting the vehicles in LoS since the corresponding mediums cannot penetrate a large solid blockage in the road such as a truck or a bus. However, the disability of detection in NLoS  results in severe safety issues because many car accidents happen when the drivers are unaware of the environment. 
	Therefore, designing techniques for positioning in NLoS is an urgent task for applications of autonomous driving in real-life. In~\cite{Kaifeng}, a NLoS positioning technique is developed to estimate a vehicle's position by exploiting the geometry information of multi-path signal transmissions. It is also possible to infer the vehicle's size and shape if the vehicle equips multiple antenna clusters, but the resultant resolution is low.

	\subsection{Main Contributions}
	The contributions of this work are summarized as follows.

	\noindent 1) {\bf Real-time Positioning:}  The proposed multi-point vehicular positioning is a one-way simultaneous multi-antenna transmissions from the TV to the SV. Compared to conventional multi-point positioning techniques requiring time-consuming scanning process, our technique is able to achieve ultra-low latency and realize real-time multi-point positioning. 
	
	{\noindent \noindent 2) {\bf Synchronization:} In the above one-way simultaneous transmission system, 
		a prerequisite for the multi-point positioning is to know a clock synchronization gap between the SV and the TV. 
		To this end, we design a novel synchronization algorithm based on multiple \emph{Phase-Difference-of-Arrival} (PDoA) information obtained by transmitting \emph{signature waveforms} (SWs) from representative antennas. The PDoA information leads to constructing a system of equations following a hyperbolic geometry.

		{\noindent \noindent 3) {\bf Positioning in LoS and NLoS:} In LoS case, A FFT-based signal processing technique is used to retrieve the location information of all transmit antennas, i.e., the TV's position, based on the received signals. In NLoS, the SV uses multiple \emph{mirror vehicles} (MVs) as reflectors of the TV's transmitted mmWave signal to estimate the TV's position. 	By the aid of specular reflection, the SV is able to achieve the NLoS TV's multi-point positioning if the MV's locations are given but unknown in practice. This difficulty is overcome by exploiting the geometry relation between the vehicle's reflected and \emph{real positions} (RPs).  
			
			
			\section{System Model}
			
			\begin{figure}[t]
				\centering
				\includegraphics[
				width=150pt]{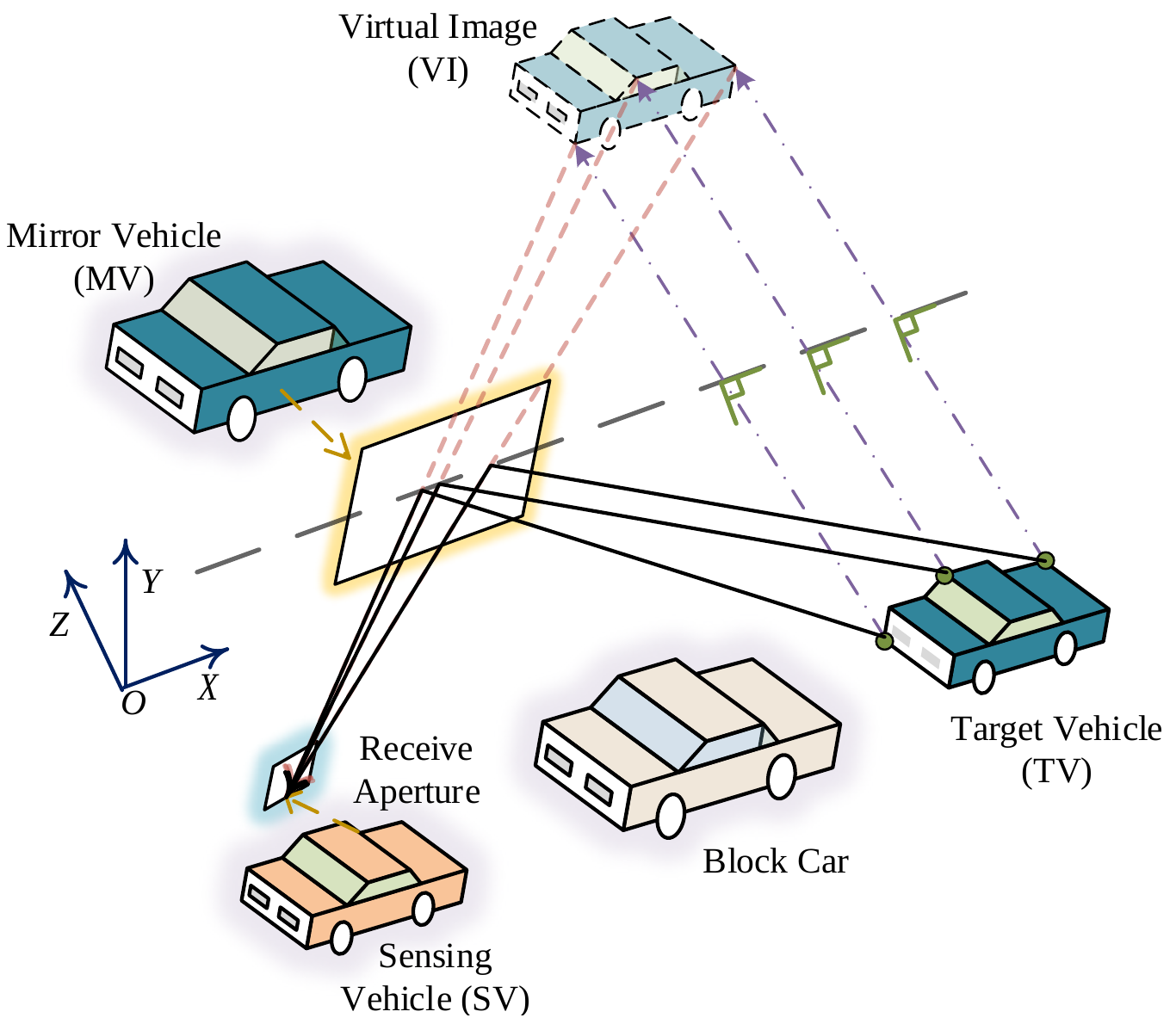}
				\caption{Illustration of signals reflection by mirror vehicles.}\label{mirror}
			\end{figure}
			
			Consider the scenario with multiple vehicles located on the road. Each vehicle is equipped with an antenna array around the vehicle body, which can generally represent its shape. We adopt the wide-band positioning system using \emph{stepped-frequency-continuous-wave} (SFCW)~\cite{SFCW2016} at mmWave spectrums as a waveform for each {transmit}  antenna. It comprises multiple CW signals with different frequencies, each of which is separated by a certain amount. 
			
			In mmWave bands, due to high attenuation loss and sparse distribution of scatters, most signal propagations follow LoS especially when vehicles are dispersive. However, a vehicle's metal body with a favorable reflection property makes it possible to propagate signals even in NLoS. Fig.~\ref{mirror} graphically illustrates the above property such that the SV tries to detect a TV by receiving the TVs' signals, but the SV is blocked by other vehicles and cannot see the TV directly. One alternative is to exploit nearby MVs as reflectors of the signals. With multiple MVs, the SV is able to obtain the position of the TV. Without loss of generality, we assume that the {receive}  aperture is located at the SV's left side, and $X$, $Y$, and $Z$-axes represent its moving direction, height, and width, respectively.

			\subsection{Signal Model}\label{SignalModel}
			
			Two types of TV transmissions are considered depending on different purposes. The first is the SFCW transmission enabling to obtain TV's position at the SV, and the second is a  \emph{signature waveform} (SW)  transmission~\cite{SignatureWaveform} to compensate the SV-TV synchronization gap.

			\subsubsection{SFCW Transmission}
			All TV's antennas  simultaneously broadcast the same SFCW waveform denoted by ${\boldsymbol{s}}(t)$  as
			\begin{align}\label{transmitSig}
			{\boldsymbol{s}}(t)= \left[\exp(j{2\pi {f_1}{t}}), ..., \exp(j{2\pi {f_K}{t}}) \right]^T,
			\end{align}
			where $\{f_k\}_{k=1}^K$ represents the set of frequencies with constant gap $\Delta$ such that $f_k=f_1+(k-1)\Delta$ for $k=1, \cdots, K$. The received signal at the SV's antenna $m$  is given as 
			\begin{align}\label{Eq:TotReceivedSig}
			{\boldsymbol r}_{m}(t)=\sum_{\ell=0}^L  {\boldsymbol r}_{m}^{(\ell)}(t),
			\end{align}
			where ${\boldsymbol r}_{m}^{(\ell)}(t)$ denotes the signal reflected by the  $\ell$-th MV as 
			\begin{align}\label{receiveSig}
			{\boldsymbol r}_{m}^{(\ell)}(t)&=\Gamma^{(\ell)}\sum_{n=1}^N {\boldsymbol r}_{n,m}^{(\ell)}(t)=\Gamma^{(\ell)}\sum_{n=1}^N {\boldsymbol s}(t+\sigma-\tau_{n,m}^{(\ell)}). 
			\end{align}
			Here, $\Gamma^{(\ell)}$ is the complex reflection coefficient given as $\Gamma^{(\ell)}=|\Gamma^{(\ell)}|\exp(j \angle \Gamma^{(\ell)})$\footnote{The reflection planes are sides of vehicles, which are usually plane and smooth. In addition, the limited size of the receive aperture makes the incident angles of the signals almost the same. As a result, the $\Gamma^{(\ell)}$ can be well approximated as a  constant regardless of antennas (see e.g., \cite{SFCW2016,Antenna}).},  $N$ is the number of TV's antennas, $\sigma$ is the TV-SV synchronization gap (in sec), and $\tau_{n,m}^{(\ell)}$ is the signal travel time from TV's antenna $n$ to SV's antenna $m$ proportional to the propagation distance $d_{n,m}$, i.e., $d_{n,m}=c\cdot \tau_{n,m}^{(\ell)}$ where $c=3\cdot 10^8$ (m/sec) is the speed of light. Note that signal path $\ell=0$ represents the LoS path of which the reflection coefficient $\Gamma^{(0)}$ is one. Last, we assume that the signals reflected by different MVs come from different directions, facilitating to differentiate signals from different MVs according to the \emph{angle-of-arrival} (AoA). In other words, we can decompose \eqref{Eq:TotReceivedSig} into individual ${\boldsymbol r}_{m}^{(\ell)}(t)$ as
			\begin{align}\label{Eq:MatrixForm}
			{\mathbf{R}}_{m}(t)=\left[ {\boldsymbol{r}}_{m}^{(0)}(t), {\boldsymbol{r}}_{m}^{(1)}(t), \cdots,  {\boldsymbol{r}}_{m}^{(L)}(t)\right].
			\end{align}

			The synchronization gap $\sigma$ is an unknown parameter the SV attempts to estimate. Assuming the estimiated gap is $\tilde{\sigma}$, the received signal \eqref{Eq:MatrixForm} is demodulated by multiplying ${\bf D}= \mathsf{diag}\{{\boldsymbol{s}}(t+\tilde{\sigma})^{H}\}$
			\begin{align}
			{\mathbf{Y}}_m=\left[ {\mathbf{y}}_{m}^{(0)}, {\mathbf{y}}_{m}^{(1)}, \cdots,  {\mathbf{y}}_{m}^{(L)}\right]
			={\bf D}{\mathbf{R}}_{m}(t),
			\end{align}
			where
			${\mathbf{y}}_{m}^{(\ell)} = {\bf D} {\mathbf{r}}_{m}^{(\ell)}(t)= \left[y_m^{\ell,1}, y_m^{\ell,2},..., y_m^{\ell,K}\right]^T$ with
			\begin{align}\label{demodulated}
			y_m^{\ell,k} &= \Gamma^{(\ell)}\sum_{n=1}^N{{\exp\left[ { j2\pi {f_k}(\sigma-\tilde{\sigma}-{\tau_{n,m}^{(\ell)}} )  } \right]}}. 
			\end{align}


			
			\subsubsection{SW Transmission}	
			Two representative antennas, $\mathsf{a}$ and $\mathsf{b}$, are selected in the array of the TV with coordinates  $\mathbf{x}_{\mathsf{a}} = (x_{\mathsf{a}}, y_{\mathsf{a}}, z_{\mathsf{a}})$ and $\mathbf{x}_{\mathsf{b}} = (x_{\mathsf{b}}, y_{\mathsf{b}}, z_{\mathsf{b}})$, respectively.  To estimate $\mathbf{x}_{\mathsf{a}}$ and $\mathbf{x}_{\mathsf{b}}$ as  an intermediate step for the  synchronization, each of them simultaneously transmits SWs comprising two CWs with different frequencies as
			\begin{align}\label{Signature_Waveform}
			\boldsymbol{s}_{\mathsf{a}}(t)=&[\exp(j2\pi f_{\mathsf{a}} t), \exp(j2\pi (f_{\mathsf{a}}+\Delta) t)]^T,\nonumber\\
			\boldsymbol{s}_{\mathsf{b}}(t)=&[\exp(j2\pi f_{\mathsf{b}} t), \exp(j2\pi (f_{\mathsf{b}}+\Delta) t)]^T,
			\end{align}
			where $f_{\mathsf{a}}$ and $f_{\mathsf{b}}$ represent the SWs' frequencies originated from the antennas $\mathsf{a}$ and $\mathsf{b}$ respectively, and $\Delta$ is the frequency separation for another CW in each SW. 
			The frequency bands for the SWs do not  overlap to each other and are different from that of the SFCW,
			namely, $f_{\mathsf{a}}<f_{\mathsf{a}}+\Delta<f_{\mathsf{b}}<f_{\mathsf{b}}+\Delta<f_1$, 
			enabling to receive and demodulate the SWs independently from the SFCW without interference. 
			Similar to \eqref{Eq:MatrixForm}, the received signals at the SV's antenna $m$ are 
			expressed as
			\begin{align}
			{\boldsymbol A}_{m}\!(t)\!=\![{\boldsymbol a}_{m}^{(1)}\!(t), \cdots,{\boldsymbol a}_{m}^{(L)}\!(t)], 
			{\boldsymbol B}_{m}\!(t)\!=\![{\boldsymbol b}_{m}^{(1)}\!(t), \cdots, {\boldsymbol b}_{m}^{(L)}\!(t)], \nonumber
			\end{align}
			where 
			${\boldsymbol a}_{m}^{(\ell)}(t)\!\!=\!\!\Gamma^{(\ell)} {\boldsymbol s}_{\mathsf{a}}(t\!+\!\sigma\!-\!\tau_{\mathsf{a},m}^{(\ell)})$ and 
			${\boldsymbol b}_{m}^{(\ell)}(t)\!=\!\Gamma^{(\ell)} {\boldsymbol s}_{\mathsf{b}}(t\!+\!\sigma\!-\!\tau_{\mathsf{b},m}^{(\ell)})$. 
			By multiplying ${\bf D}_{\mathsf{a}}\!=\! \mathsf{diag}\!\left\{\!\boldsymbol{s}_{\mathsf{a}}(t)^H\!\right\}$ and 
			${\bf D}_{\mathsf{b}}\!=\! \mathsf{diag}\!\left\{\!\boldsymbol{s}_{\mathsf{b}}(t)^H
			\!\right\}$ respectively, ${\boldsymbol A}_{m}(t)$ and ${\boldsymbol B}_{m}(t)$ can be demodulated as
			${\bf D}_{\mathsf{a}}{\mathbf{A}}_{m}(t)\!\!=\!\!\left[\! {\boldsymbol{\alpha}}_{m}^{(0)}, {\boldsymbol{\alpha}}_{m}^{(1)}, \cdots,  {\boldsymbol{\alpha}}_{m}^{(L)}\!\right]$ and $
			{\bf D}_{\mathsf{b}}{\mathbf{B}}_{m}(t)\!\!=\!\!\left[\! {\boldsymbol{\beta}}_{m}^{(0)}, {\boldsymbol{\beta}}_{m}^{(1)}, \cdots,  {\boldsymbol{\beta}}_{m}^{(L)}\!\right]$, where
			\begin{align} 
			{\boldsymbol{\alpha}}_{m}^{(\ell)} \!&
			=\!\Gamma^{(\ell)}\!\!\left[\!\exp\! \left(\! j{2\pi {f_{\mathsf{a}}}\!(\sigma\!\!-\!\!{\tau _{\mathsf{a},m}^{(\ell)}})} \!\right), 
			\exp\! \left(\! j{2\pi (f_{\mathsf{a}}\!\!+\!\!\Delta)({\sigma\!\!-\!\!{\tau _{\mathsf{a},m}^{(\ell)}}})} \!\right)\!\right],\nonumber\\
			{\boldsymbol{\beta}}_{m}^{(\ell)}  \!&=\!\Gamma^{(\ell)}\!\!\left[\!\exp\! \left(\! j{2\pi {f_{\mathsf{b}}}\!( {\sigma\!\!-\!\!{\tau _{\mathsf{b},m}^{(\ell)}}})} \!\right), 
			\exp\! \left(\! j{2\pi (f_{\mathsf{b}}\!\!+\!\!\Delta)( {\sigma\!\!-\!\!{\tau _{\mathsf{b},m}^{(\ell)}}})} \!\right)\!\right].\nonumber
			\end{align}


			
			
		}
		\subsection{Problem Formulations}
		
		To establish the direct relation between the propagation distance and the phase extracted from $y_m^{\ell,k}$ \eqref{demodulated}, the SV aims at compensating the clock synchronization gap $\sigma$, namely,
		\begin{align}\label{SyncProblemFormulation}\tag{E1}
		\bigcup_{\ell=1}^L\{{\boldsymbol{\alpha}}_{m}^{(\ell)}, {\boldsymbol{\beta}}_{m}^{(\ell)}\}_{m=1}^M \Longrightarrow \tilde{\sigma}=\sigma, 
		\end{align}
		where $M$ is the number of receive antennas deployed in the SV. Assume that the synchronization is made. The  $y_m^{\ell,k}$ \eqref{demodulated} is then rewritten by the following surface integral form:
		\begin{align}\label{demodulated_sync}
		y_m^{\ell,k} \underset{(\sigma=\tilde{\sigma})}{\Longrightarrow} \Gamma^{(\ell)}\int {\mathsf{I}_\mathbf{x^{(\ell)}}{\exp\left( - j2\pi\frac{f_k}{c} D(\mathbf{x}, \mathbf{p}_{m}) \right)}d{\mathbf{x}}},
		\end{align}
		where $\mathsf{I}_\mathbf{x^{(\ell)}}$ is an indicator to become one if a TX antenna exists on point $\mathbf{x}$, which is symmetric to the point ${\mathbf{x^{(\ell)}}}$ w.r.t. the surface of MV $\ell$, and zero
		otherwise, and $D(\mathbf{x}, \mathbf{p}_{m})$ represent the total propagation distance between $\mathbf{x}$ and the location of 
		RX antenna $m$ denoted by $\mathbf{p}_m$. Estimating $\{\mathsf{I}_{\mathbf{x^{(\ell)}}}\}$ is equivalent to detecting TV's position $\ell$, namely, 
		\begin{align}\label{ProblemDefinition1}\tag{E2}
		\bigcup_{k=1}^K\bigcup_{m=1}^M y_m^{\ell,k} \Longrightarrow\{\mathsf{I}_\mathbf{x^{(\ell)}}\}. 
		\end{align}
		It is worth noting that in case of LoS path ($\ell=0$), the distance $D(\mathbf{x}, \mathbf{p}_m)$ is the direct distance between $\mathbf{x}$ and $\mathbf{p}_m$. Thus, the position of real TV is directly detected.   
		In NLoS case, $D(\mathbf{x}, \mathbf{p}_m)$ corresponds to the total distance from $\mathbf{x}$ via MV $\ell$ to $\mathbf{p}_m$. Since SV has no priori information of MV's location, the detected position could be different from the RP due to reflection, which is called a \emph{virtual position} (VP) (see Fig. \ref{mirror}). It is necessary to map multiple VPs into the RP, namely,
		\begin{align}\label{ProblemDefinition2}\tag{E3}
		\bigcup_{\ell=0}^L \{\mathsf{I}_\mathbf{x^{(\ell)}}\} \Longrightarrow  \{\mathsf{I}_\mathbf{x}\}.
		\end{align}
		

		\section{Multi-Point Vehicular Positioning}\label{sec:LoS}
		In this section, we consider a case where a LoS path between the TV and the SV exists, making it reasonable to ignore other NLoS paths due to the significant power difference between LoS and NLoS paths.
		Thus only synchronization and multi-point positioning steps are needed, which are explained in detail, following overview, algorithm description and performance analysis.
		
	\subsection{Synchronization}\label{sec:Synchronization}
\subsubsection{Overview} We apply the technique of \emph{phase-difference-of-arrival} (PDoA) based localization~\cite{PDoA}  to compensate the synchronization gap, which is illustrated in the following. 
Consider the SWs from the representative antenna ${\mathsf{a}}$ first. The received SWs at the receive antenna $m$ through the signal path reflected by the $\ell$-th MV are given as
\begin{align}
{\boldsymbol{\alpha}}_{m}^{(\ell)} \!&
=\!\Gamma^{(\ell)}\!\!\left[\!\exp\! \left(\! j{2\pi {f_{\mathsf{a}}}\!(\sigma\!\!-\!\!{\tau _{m}^{(\ell)}})} \!\right), 
\exp\! \left(\! j{2\pi (f_{\mathsf{a}}\!\!+\!\!\Delta)({\sigma\!\!-\!\!{\tau _{m}^{(\ell)}}})} \!\right)\!\right],\nonumber
\end{align}
where $\tau_m^{(\ell)}$ represents the flight time, and the indices of the transmit antennas are omitted for brevity. At the SV's antenna $m$, the phase difference between the two components is calculated as $\eta_{ m}^{(\ell)}=2\pi {\Delta}({\tau _{m}^{(\ell)}} -\sigma)$. Note that $\sigma$ is the same for signals from different paths. Recalling the relation between the propagation distance $d_m^{(\ell)}$ and $\tau_m^{(\ell)}$, i.e., $d_m^{(\ell)}=\tau_m^{(\ell)}\cdot c$ with light speed $c$, the synchronization gap $\sigma$ is given as
\begin{align}\label{Sigma}
\sigma=\tau_m^{(\ell)}-\frac{\eta_m^{(\ell)}}{2\pi\Delta}=\frac{d_m^{(\ell)}}{c}-\frac{\eta_m^{(\ell)}}{2\pi\Delta}.
\end{align}
Moreover, according to the geometric relation between the $\ell$-th VP and the TV shown in Fig.~\ref{mirror}, the propagation distance can be presented as $d_m^{(\ell)} = \left\| {{{\mathbf{p}}_{{m}}} - {{\mathbf{x}}_{\mathsf{a}}^{(\ell)}}} \right\|$, where $\mathbf{x}_{\mathsf{a}}^{(\ell)}$ is the location of the antenna $\mathsf{a}$ on the VP $\ell$, which are different from the real location denoted by  $\mathbf{x}_{\mathsf{a}}$ in case of NLoS.
Because the locations of all SV's antennas $\{\mathbf{p}_m\}$ are given, the synchronization problem is translated to find the location  $\mathbf{x}_{\mathsf{a}}^{(\ell)} = (x_{\mathsf{a}}^{(\ell)}, y_{\mathsf{a}}^{(\ell)}, z_{\mathsf{a}}^{(\ell)})$.
Specifically, let $\mathsf{F}_{i}(\mathbf{x}_{n}^{(\ell)})$ denote the propagation distance difference from the antenna $n$ on the VP $\ell$ to the SV's antennas $m$ and $1$ as 
\begin{align}\label{FuncDefine}
\mathsf{F}_{m}(\mathbf{x}_{\mathsf{a}}^{(\ell)}) =& \left\| {{{\mathbf{p}}_{{m}}} \!-\! {{\mathbf{x}}_{\mathsf{a}}^{(\ell)}}} \right\| \!-\! \left\| {{{\mathbf{p}}_{{1}}} \!-\! {{\mathbf{x}}_{\mathsf{a}}^{(\ell)}}} \right\|\nonumber\\
=&d_m^{(\ell)}\!-\!d_1^{(\ell)}
\!\overset{(a)}{=}\! c\frac{\left(\! \eta _{m}^{(\ell)} \!-\! \eta _{1}^{(\ell)} \!\right)}{2\pi \Delta }, \ m\!=\!2,\cdots, M,
\end{align}
where (a) follows from  \eqref{Sigma}. In \eqref{FuncDefine}, there are $M-1$ equations with three unknowns $x_{\mathsf{a}}^{(\ell)}$, $y_{\mathsf{a}}^{(\ell)}$, and $z_{\mathsf{a}}^{(\ell)}$.


\subsubsection{Algorithm Description}	
One challenge to solve \eqref{FuncDefine} is a nonlinearity of $\mathsf{F}_{m}(\mathbf{x}_{\mathsf{a}}^{(\ell)})$ following a hyperbolic geometry. To overcome the difficulty, we use the the Gauss-Newton method~\cite{Newton}. 
Assume that the initial value $\tilde{\mathbf{x}}_{\mathsf{a}}^{(\ell)}$ is a good estimator of $\mathbf{x}_{\mathsf{a}}^{(\ell)}$. Then, $\mathsf{F}_{m}(\mathbf{x}_{\mathsf{a}}^{(\ell)})$ can be approximated as 
\begin{align}\label{Approximation}
\mathsf{F}_{m}({{\mathbf{x}}_{\mathsf{a}}^{(\ell)}}) \approx {\mathsf{F}_{m}}({\tilde{\mathbf{x}}_{\mathsf{a}}^{(\ell)}})+\frac{\partial {\mathsf{F}_{m}}({\tilde{\mathbf{x}}_{\mathsf{a}}^{(\ell)}})}{{\partial {{\mathbf{x}}_{\mathsf{a}}^{(\ell)}}}}\boldsymbol{h},
\end{align}
where $\boldsymbol{h}=[h_x,h_y,h_z]^T$. Plugging \eqref{Approximation} into \eqref{FuncDefine} gives the following linear of equations: 
\begin{align}\label{LinearSystemofEquation}\tag{E4}
{\bf{G}}(\tilde{\mathbf{x}}_{\mathsf{a}}^{(\ell)})\boldsymbol{h} = {\bf{b}}(\tilde{\mathbf{x}}_{\mathsf{a}}^{(\ell)}),
\end{align}
where 
\begin{align}
{\bf{G}}(\tilde{\mathbf{x}}_{\mathsf{a}}^{(\ell)})=\left[ {\begin{array}{*{20}{c}}
	{\frac{{\partial {\mathsf{F}_{2}}({\tilde{\mathbf{x}}_{\mathsf{a}}^{(\ell)}})}}{{\partial x_{\mathsf{a}}^{(\ell)}}}}&{\frac{{\partial {\mathsf{F}_{2}}({\tilde{\mathbf{x}}_{\mathsf{a}}^{(\ell)}})}}{{\partial y_{\mathsf{a}}^{(\ell)}}}}&{\frac{{\partial {\mathsf{F}_{2}}({\tilde{\mathbf{x}}_{\mathsf{a}}^{(\ell)}})}}{{\partial z_{\mathsf{a}}^{(\ell)}}}}\\
	{...}&{...}&{...}\\
	{\frac{{\partial {\mathsf{F}_{M}}({\tilde{\mathbf{x}}_{\mathsf{a}}^{(\ell)}})}}{{\partial x_{\mathsf{a}}^{(\ell)}}}}&{\frac{{\partial {\mathsf{F}_{M}}({\tilde{\mathbf{x}}_{\mathsf{a}}^{(\ell)}})}}{{\partial y_{\mathsf{a}}^{(\ell)}}}}&{\frac{{\partial {\mathsf{F}_{M}}({\tilde{\mathbf{x}}_{\mathsf{a}}^{(\ell)}})}}{{\partial z_{\mathsf{a}}}}}
	\end{array}} \right], \nonumber
\end{align}
\begin{align}
{\bf{b}}(\tilde{\mathbf{x}}_{\mathsf{a}}^{(\ell)})=
\left[ {\begin{array}{*{20}{c}}
	{c\frac{{\left( {{\eta _{2}-\eta _{1}} } \right)}}{{2\pi \Delta }} - {\mathsf{F}_{2}}({\tilde{\mathbf{x}}_{\mathsf{a}}^{(\ell)}})}\\
	{...}\\
	{c\frac{{\left( {{\eta _{M}-\eta _{1}} } \right)}}{{2\pi \Delta }} - {\mathsf{F}_{M}}({\tilde{\mathbf{x}}_{\mathsf{a}}^{(\ell)}})}
	\end{array}} \right].
\end{align}
The estimated value of $\boldsymbol{h}$ can thus be given as
\begin{align}\label{Solution}
\boldsymbol{h} = {\left( {{{\bf{G}}(\tilde{\mathbf{x}}_{\mathsf a}^{(\ell)})^T}{\bf{G}}(\tilde{\mathbf{x}}_{\mathsf{a}}^{(\ell)})} \right)^{ - 1}}{{\bf{G}}(\tilde{\mathbf{x}}_{\mathsf{a}}^{(\ell)})^T}{\bf{b}}(\tilde{\mathbf{x}}_{\mathsf{a}}^{(\ell)}).
\end{align}
The estimated location of ${{\mathbf{x}}_{\mathsf{a}}^{(\ell)}}$ is then updated as
\begin{align}\label{Result}
{\tilde{\mathbf{x}}_{\mathsf{a}}^{(\ell)}} \longleftarrow {\tilde{\mathbf{x}}_{\mathsf{a}}^{(\ell)}} + \boldsymbol{h}. 
\end{align}
By repeating the update procedure several times, $\tilde{\mathbf{x}}_{\mathsf{a}}^{(\ell)}$ converges to ${\mathbf{x}}_{\mathsf{a}}^{(\ell)}$. Besides, ${\mathbf{x}}_{\mathsf{b}}^{(\ell)}$ can be derived in the same way, which helps locate the TV in Sec. \ref{sec2:Common-point}.
Substituting ${\mathbf{x}}_{\mathsf{a}}^{(\ell)}$ into \eqref{Sigma} leads to the accurate synchronization gap $\sigma$. Moreover, since the synchronization gap is not affected by the MVs, it helps the SV to judge whether the signals of different AOAs, are from the same TV.

\begin{remark}[Initial Value Selection]{The Gauss-Newton method sometimes converges to a local optimal point due to the wrong selection of the initial  ${\tilde{\mathbf{x}}_{\mathsf{a}}^{(\ell)}}$~\cite{Gauss_Newton}. It is recommended to use the solution satisfying any three equations among \eqref{FuncDefine} as its initial selection, of which the convergence of the global optimal is verified by simulation. }
\end{remark}

\begin{remark}[Synchronization with Phase Error] In the presence of significant channel noise, the system of the equations \ref{LinearSystemofEquation} does not hold. To overcome the difficulty, we  formulate the following minimization problem:
	\begin{align}
	\boldsymbol{h}^*=&\arg\min_{\boldsymbol{h}} \left\|{\bf{G}}(\tilde{\mathbf{x}}_{\mathsf{a}}^{(\ell)})\boldsymbol{h} - {\bf{b}}(\tilde{\mathbf{x}}_{\mathsf{a}}^{(\ell)})\right\|\nonumber\\
	=& {\left( {{{\bf{G}}(\tilde{\mathbf{x}}_{\mathsf{a}}^{(\ell)})^T}{\bf{G}}(\tilde{\mathbf{x}}_{\mathsf{a}}^{(\ell)})} \right)^{ - 1}}{{\bf{G}}(\tilde{\mathbf{x}}_{\mathsf{a}}^{(\ell)})^T}{\bf{b}}(\tilde{\mathbf{x}}_{\mathsf{a}}^{(\ell)}),
	\end{align}
	which has the same structure as \eqref{Solution}. Note that the resultant $\sigma$ from \eqref{Sigma} is differently calculated depending on the choice of the SV's antenna $m$, denoted by $\sigma_m$. Averaging these values gives the accurate estimate of $\sigma$ such that $\sigma=\sum_{m=1}^M \sigma_m$. 
\end{remark}


\subsubsection{Performance Analysis}
Assuming that the phase error follows an \emph{independent identically distributed} (i.i.d.) Gaussian distribution ${\boldsymbol{\varepsilon }} \sim  \mathcal{N}(0, \sigma_z^2\bf{I})$, the following proposition is provided.  
\begin{prop}[Error Covariance]\label{Prop:sync} \emph{As the number of SV's antennas $M$ becomes larger, the covariance matrix ${\rm{cov}}({\bf{\tilde x}}_{\mathsf a}^{(\ell )})$ scales with $\frac{1}{M-1}$. }
\end{prop}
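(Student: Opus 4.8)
The plan is to treat Proposition~\ref{Prop:sync} as a first-order error-propagation result for the linearized least-squares estimator in~\eqref{Solution}, and to track how the $M-1$ difference equations control the estimate covariance. First I would linearize the measurement model at convergence. Let $\mathbf{x}_{\mathsf{a}}^{(\ell)}$ be the true antenna location and $\boldsymbol{e}=\tilde{\mathbf{x}}_{\mathsf{a}}^{(\ell)}-\mathbf{x}_{\mathsf{a}}^{(\ell)}$ the residual error after the update~\eqref{Result} has converged. In the noiseless case the right-hand side $\mathbf{b}$ vanishes at the true point, since the measured distance differences $c(\eta_m^{(\ell)}-\eta_1^{(\ell)})/(2\pi\Delta)$ equal $\mathsf{F}_m(\mathbf{x}_{\mathsf{a}}^{(\ell)})$ by~\eqref{FuncDefine}. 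Injecting the i.i.d.\ phase error $\boldsymbol{\varepsilon}\sim\mathcal{N}(0,\sigma_z^2\mathbf{I})$ perturbs each equation additively, so to first order $\mathbf{G}\,\boldsymbol{e}\approx\tfrac{c}{2\pi\Delta}\boldsymbol{\varepsilon}$, where I abbreviate $\mathbf{G}=\mathbf{G}(\mathbf{x}_{\mathsf{a}}^{(\ell)})$. Substituting into~\eqref{Solution} yields
\begin{align}
\boldsymbol{e}\approx\frac{c}{2\pi\Delta}\bigl(\mathbf{G}^{T}\mathbf{G}\bigr)^{-1}\mathbf{G}^{T}\boldsymbol{\varepsilon}.\nonumber
\end{align}

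Second I would compute the covariance directly. Because $\mathbb{E}[\boldsymbol{\varepsilon}\boldsymbol{\varepsilon}^{T}]=\sigma_z^2\mathbf{I}$, the sandwich $(\mathbf{G}^{T}\mathbf{G})^{-1}\mathbf{G}^{T}(\sigma_z^2\mathbf{I})\mathbf{G}(\mathbf{G}^{T}\mathbf{G})^{-1}$ collapses, giving
\begin{align}
\mathrm{cov}\bigl(\tilde{\mathbf{x}}_{\mathsf{a}}^{(\ell)}\bigr)\approx\Bigl(\tfrac{c}{2\pi\Delta}\Bigr)^{2}\sigma_z^2\,\bigl(\mathbf{G}^{T}\mathbf{G}\bigr)^{-1}.\nonumber
\end{align}
The entire $M$-dependence is thus carried by $(\mathbf{G}^{T}\mathbf{G})^{-1}$. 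Writing the $m$-th row of $\mathbf{G}$ as the gradient $\mathbf{g}_m^{T}=\partial\mathsf{F}_m(\mathbf{x}_{\mathsf{a}}^{(\ell)})/\partial\mathbf{x}_{\mathsf{a}}^{(\ell)}$, I would note that $\mathbf{G}^{T}\mathbf{G}=\sum_{m=2}^{M}\mathbf{g}_m\mathbf{g}_m^{T}$ is a sum of exactly $M-1$ rank-one outer products. Assuming the receive antennas are drawn from a fixed aperture so that the normalized Gram matrix $\tfrac{1}{M-1}\sum_{m=2}^{M}\mathbf{g}_m\mathbf{g}_m^{T}$ converges to a fixed positive-definite $\bar{\mathbf{R}}$, I factor $(\mathbf{G}^{T}\mathbf{G})^{-1}=\tfrac{1}{M-1}\bigl(\tfrac{1}{M-1}\mathbf{G}^{T}\mathbf{G}\bigr)^{-1}\to\tfrac{1}{M-1}\bar{\mathbf{R}}^{-1}$, so every entry of the covariance decays as $1/(M-1)$, completing the argument.

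The hard part will be justifying the scaling step, not the algebra. One must argue that $\mathbf{G}^{T}\mathbf{G}$ genuinely grows linearly in $M$ while remaining invertible: the gradients $\{\mathbf{g}_m\}$ must stay bounded and span $\mathbb{R}^{3}$ as antennas accumulate, i.e.\ the receive array must be non-degenerate (not collinear) so that the three coordinates are identifiable and $\bar{\mathbf{R}}\succ 0$. Since each $\mathbf{g}_m$ is a difference of unit pointing vectors, a far-field geometry could shrink these differences, so care is needed to keep $\bar{\mathbf{R}}$ bounded away from singularity for a fixed aperture and range. I would also flag that the clean collapse relies on treating the $M-1$ RHS perturbations as independent with covariance $\sigma_z^2\mathbf{I}$, as the proposition's assumption dictates; the correlation otherwise induced by the shared reference antenna $1$ is what this idealization suppresses, and it is precisely what allows $\mathrm{cov}(\tilde{\mathbf{x}}_{\mathsf{a}}^{(\ell)})$ to reduce to a scalar multiple of $(\mathbf{G}^{T}\mathbf{G})^{-1}$.
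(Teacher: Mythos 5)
Your proof is correct and takes essentially the same route as the paper's Appendix A: both reduce the problem to $\mathrm{cov}(\tilde{\mathbf{x}}_{\mathsf{a}}^{(\ell)})\propto \sigma_z^2\,(\mathbf{G}^{T}\mathbf{G})^{-1}$, write $\mathbf{G}^{T}\mathbf{G}$ as a sum of $M-1$ gradient outer products, factor out $\tfrac{1}{M-1}$, and argue that the normalized Gram matrix stabilizes (the paper says it ``converges to its expectation,'' you say it converges to a fixed positive-definite $\bar{\mathbf{R}}$) so the inverse scales as $\tfrac{1}{M-1}$. Your explicit derivation of the sandwich covariance from the linearized least-squares step, and your flagging of the non-degeneracy condition $\bar{\mathbf{R}}\succ 0$ and of the reference-antenna correlation, simply make rigorous what the paper asserts implicitly.
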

\begin{proof}
	See Appendix A.
\end{proof}
\begin{remark} [Effect of SV's Number of Antennas] The SV's number of antennas $M$ represents to the spatial sampling rate on the receive aperture. Larger $M$ enables to decrease the distortion level, yielding $(M-1)$ times faster convergence.   
\end{remark}

\subsection{Image Retrivel}\label{sec:Imaging}

The synchronization gap $\sigma$ can be removed by the preceding step, facilitating to solve \ref{ProblemDefinition1} in the following. Recall the synchronized demodulation \eqref{demodulated_sync} enabling to express  ${y}_m^{\ell,k}$ as a 3D surface integral form as
\begin{align}\label{imageSignal}
{y}_m^{\ell,k} &
\!=\! \Gamma^{(\ell)}\!\!\!\int_{\mathbb{R}^3}\!\!\! {\mathsf{I}_{\mathbf{x}^{\!(\ell)}}{\exp\!\left(\!\! - j\frac{2\pi\! f_k}{c}\!\sqrt {\!(\mathbf{x}^{\!(\ell)}\!-\!\mathbf{p}_m)(\mathbf{x}^{\!(\ell)}\!-\!\mathbf{p}_m)^{\!T}} \right)}d{\mathbf{x}}^{\!(\ell)}}, 
\end{align}
where $\sqrt {(\mathbf{x}^{(\ell)}-\mathbf{p}_m)(\mathbf{x}^{(\ell)}-\mathbf{p}_m)^T}$ represents the Euclidean distance between point $\mathbf{x}^{(\ell)}$ and the location of the SV's antenna $m$, denoted by $\mathbf{p}_m=[x_m, y_m, z_0]$.

{ 	
	Let $\mathbf{f}_k=\left[f_k^{(x)}, f_k^{(y)}, f_k^{(z)}\right]$ denote the vector of which the components represent the spatial frequencies to the corresponding directions, namely,
	$f_k=\sqrt {\left(f_{k}^{(x)}\right)^2 + \left(f_{k}^{(y)}\right)^2 + \left(f_{k}^{(z)}\right)^2}$.
	The spherical wave in \eqref{imageSignal} can be decomposed into an infinite superposition of plane waves 
	by rewriting the exponential term in terms of $\mathbf{f}_k$ as 
	\begin{align}\label{Decompose1}
	{y}_m^{\ell,k} \!=\!\Gamma^{\!(\ell)}\!\!\!\!\int_{\mathbb{R}^3}\!\!\! {\mathsf{I}_{\mathbf{x}^{\!(\ell)}}\!\!\left\{\!\int_{\!\sqrt{\mathbf{f}_k \mathbf{f}_k^T}\!=\!f_k}\!\!\!\!\exp\!\left(\!\!-j\!\frac{2\pi}{c} \mathbf{f}_k (\mathbf{x}^{\!(\ell)}\!\!-\!\mathbf{p}_m)^{\!T} \!\right)\!\!d{\mathbf{f}_k} \!\right\}\!d{\mathbf{x}}^{\!(\ell)}},
	\end{align}
	{which is proved in~\cite{FourierProof} without approximation.}
	
	According to \eqref{Decompose1}, the received signals can be rewritten with Fourier and inverse Fourier transform on $\mathbf{x}^{(\ell)}$ and $\mathbf{f}_k$, respectively. With such relation between the transmit antenna location and the received data, the image $\mathsf{I}_{\mathbf{x}^{(\ell)}}$ of VP $\ell$ can be straightforward obtained from ${y}_m^{\ell,k}$, which is given as follows.
}

	\begin{align}\label{resonstruction}
	\tilde{\mathsf{I}}_{\mathbf{x}^{(\ell)}}\!=&
	\mathsf{FT}_{\mathsf{3D}}^{ - 1}\bigg\{\!\frac{\mathsf{FT}_{\mathsf{2D}}\left( \{{{y}_m^{\ell,k}}\}_{m=1}^M, f_k^{(x)}, f_k^{(y)}\right)}{{\Gamma^{(\ell)}}}\times\nonumber\\
	&\exp\!\bigg(\!\!\!-\! j\frac{2\pi}{c}\!\underbrace {\!\sqrt {\!{(f_k)^2}\!-\! \left(\!f_{k}^{(x)}\!\right)^{\!2} \!-\! \left(\!f_{k}^{(y)}\!\right)^{\!2}}}_{f_k^{(z)}} \!\!{z_0}\!\bigg), \mathbf{x}^{(\ell)}\!\bigg\},
	\end{align}
where $\mathsf{FT}_{\mathsf{3D}}^{-1}$, $\mathsf{FT}_{\mathsf{2D}}$ represent 3-D inverse Fourier transform and 2-D Fourier transforms, respectively. 
{Note that due to the finite and discrete deployment of antennas at both the TV and SV, discrete Fourier transform and inverse discrete Fourier transform are used, and the estimated $\tilde{\mathsf{I}}_{\mathbf{x}^{(\ell)}}$ is a continuous  value between $[0,1]$.} It is thus necessary to map $\tilde{\mathsf{I}}_{\mathbf{x}^{(\ell)}}$ into either one or zero, namely,
\begin{align}\label{Mapping_Rule}
{\mathsf{I}}_{\mathbf{x}^{(\ell)}}=\left\{
\begin{aligned}
& 1, && \textrm{if $|\tilde{\mathsf{I}}_{\mathbf{x}^{(\ell)}}| \geq\nu$},\\
& 0, && \textrm{otherwise},
\end{aligned}
\right. \quad \mathbf{x}^{(\ell)}\in \mathbb{R}^3
\end{align}
where $\nu$ represents the detection threshold. 



\begin{remark}[Resampling]{To calculate the inverse 3D Fourier transform in \eqref{resonstruction}, sampling on frequency domain with constant  interval is necessary. 
		However, due to the nonlinear relation of each frequency component as $\sqrt{\mathbf{f}_k \mathbf{f}_k^T}=f_k$, regular samplings on $f_k^{(x)}$ and $f_k^{(y)}$ lead to the irregular sequence on $f_k^{(z)}$ domain. It is thus required to make the sequence regular by using an interpolation, which is called a \emph{resampling}. We use a linear interpolation for the resampling whose error is marginal verified by simulation. }
\end{remark}

\subsection{Resolution Analysis}
In this subsection, we provide analysis on resolution of the multi-point position recovered by the above algorithm.  

The direct relation between    
bandwidth $\mathcal B$ and resolution $\delta$ is established as $\delta=\frac{c}{\mathcal{B}}$,
where $c$ is the light speed. Based on this relation, azimuth and range resolutions are analyzed here. 
\subsubsection{Azimuth Resolution} 
Consider frequency $f_k$. The bandwidth in $f^{(y)}$ direction, denoted by ${\cal B}_y$ is approximately
\begin{align}
{\cal B}_y \approx\mathbb{E}_k \left[\frac{2D}{{\sqrt {{{ {{R}} }^2} + {D^2}} }} f_k\right]=\frac{2D}{{\sqrt {{{ {{R}} }^2} + {D^2}} }} f_c,
\end{align}
where $f_c=\frac{f_1+f_K}{2}$, $R$ is the aperture size and $D$ is the range. 
The azimuth resolution in $Y$ direction can be straightforward obtained as
\begin{align}\label{spatialResolution}
\delta _y \approx \frac{{c\sqrt {{{ {{R}} }^2} + {D^2}} }}{{2{f_c}D}}.
\end{align}		

\subsubsection{Range Resolution} 
The bandwidth in $f_z$ direction is 
$\mathcal{B}_z \approx (f_K-f_1)$, and the range resolution is
\begin{align}\label{RangeResolution}
{\delta _z} \approx \frac{{c }}{{\left( {{f_K} - {f_1}} \right)}}.
\end{align}

\begin{remark}[Sampling  Requirements] To achieve the above resolutions, there exist two kinds of sampling requirements.
	
	{\bf Spatial Sampling:} To achieve the resolution in \eqref{spatialResolution}, the receive antennas deployment needs to meet the Nyquist sampling criterion. Therefore, the distances between two adjacent receive antennas along $X$ or $Y$ directions are
	\begin{align}\label{samplingReq}
	{\Delta _x} = {\Delta _y} \le \min_R \bigg\{\frac{c}{{2{f_c}}}\frac{{\sqrt {{{ {{R}} }^2} + {D^2}} }}{D}\bigg\}\mathop  = \limits^{(a)} \frac{c}{{2{f_c}}},
	\end{align}
	where (a) follows for the worst case with $R=0$.
	
	{\bf Frequency sampling:}
	The frequency sampling interval refers to the minimum frequency gap $\Delta$ to achieve the resolution $\delta_z$ \eqref{RangeResolution}, given as $\Delta\leq \frac{c}{R_{\max}}$.
\end{remark}

		\subsection{Multi-Path Position Mapping}\label{sec2:Common-point}
		
		In this subsection, we aim at reconstructing the TV's RP  using multiple VPs
		by solving \ref{ProblemDefinition2} under the assumption that the reflection surfaces are vertical to the ground, which is common in practice. To this end, we use  the representative points derived in Sec. \ref{sec:Synchronization} whose coordinates are $\mathbf{x}^{(\ell)}_{\mathsf{a}}=\left(x^{(\ell)}_{\mathsf{a}},y^{(\ell)}_{\mathsf{a}},z^{(\ell)}_{\mathsf{a}}\right)$ and $\mathbf{x}^{(\ell)}_{\mathsf{b}}=\left(x_{\mathsf{b}}^{(\ell)},y_b^{(\ell)},z_{\mathsf{b}}^{(\ell)}\right)$, which have geographical relations with the counterpart points on the RP denoted by $\mathbf{x}_{\mathsf{a}}=\left(x_{\mathsf{a}},y_{\mathsf{a}},z_{\mathsf{a}}\right)$ and $\mathbf{x}_{\mathsf{b}}=\left(x_{\mathsf{b}},y_{\mathsf{b}},z_{\mathsf{b}}\right)$  summarized in Lemma \ref{CommonPointRelation}. Using the properties,
		the feasible condition of RP reconstruction is derived and the corresponding algorithm is designed based on the feasible condition. 
		
		
		
		
		
		\begin{lemma}\label{CommonPointRelation} {Consider VPs $\ell_1$ and $\ell_2$ whose representative points are $\{\mathbf{x}_{\mathsf{a}}^{(\ell_1)}, \mathbf{x}_{\mathsf{b}}^{(\ell_1)}\}$ and $\{\mathbf{x}_{\mathsf{a}}^{(\ell_2)}, \mathbf{x}_{\mathsf{b}}^{(\ell_2)}\}$ respectively, 
				which have relations with the counterpart points of the real position denoted by $\left(\mathbf{x}_{\mathsf{a}},\mathbf{x}_{\mathsf{b}}\right)$ as follows.
				\begin{enumerate}
					\item Let $\theta_{\ell}$ denote the directed angle from the $X$-axis (SV's moving direction) to the virtual line between $\mathbf{x}_{\mathsf{a}}^{(\ell)}$ and $\mathbf{x}_{\mathsf{a}}$ or $\mathbf{x}_{\mathsf{b}}^{(\ell)}$ and $\mathbf{x}_{\mathsf{b}}$ (see Fig. \ref{location}). The relation between $\mathbf{x}_{\mathsf{a}}$ is given in terms of $\theta_{\ell_1}$ and $\theta_{\ell_2}$ as
					\begin{align}\label{representXA}
					\left(\!\begin{aligned}
					& x_{\mathsf{a}}\\
					& y_{\mathsf{a}}\\
					& z_{\mathsf{a}}
					\end{aligned}\!\right)\!\!=\!\!\left(\!
					\begin{aligned}
					&	\frac{{\!\left(\! {{z_{\mathsf{a}}^{\!(\ell_1)}} \!\!-\!\! {z_{\mathsf{a}}^{\!(\ell_2)}}} \!\right) \!\!+\!\! \left(\! {{x_{\mathsf{a}}^{\!(\ell_2)}}\!\tan\! {(\theta_{\ell_2})} \!-\! {x_{\mathsf{a}}^{\!(\ell_1)}}\!\tan\! {(\theta_{\ell_1})}}\! \right)}}{{\theta_{\ell_2} \!-\! \theta_{\ell_1}}} \\
					&      \textrm{$y_{\mathsf{a}}^{(\ell_1)}$ or $y_{\mathsf{a}}^{(\ell_2)}$}  \\
					&	\textrm{${z_{\mathsf{a}}^{(\ell_1)}} + \tan \left({\theta_{\ell_1}}\right)\left( {x_{\mathsf{a}} - {x_{\mathsf{a}}^{(\ell_1)}}} \!\right)$}
					\end{aligned}
					\right), 
					\end{align}
					and the relation between $\mathbf{x}_{\mathsf{b}}$ is given in terms of $\theta_{\ell_1}$, and $\theta_{\ell_2}$ is obtaining by replacing all $\mathsf{a}$ in \eqref{representXA} with $\mathsf{b}$. 
					\item Let $\phi_{\ell}$ denote the directed angle from the $X$-axis to the line segment of VP $\ell$ as shown in Fig. \ref{location}. The angles $\theta_{\ell}$ and $\phi_{\ell}$ enable to make a relation between two VPs $\ell_1$ and $\ell_2$ as
					\begin{align}\label{angleRelation}
					\theta_{\ell_1}-\theta_{\ell_2}=\frac{\phi_{\ell_1}-\phi_{\ell_2}}{2}. 
					\end{align}
				\end{enumerate}
			}
		\end{lemma}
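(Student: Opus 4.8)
The plan is to exploit the single structural assumption of the lemma---that every reflecting surface is vertical to the ground---to collapse the three-dimensional mirror geometry into a planar problem, and then treat the two claims separately: \eqref{representXA} as an intersection of two lines, and \eqref{angleRelation} as bookkeeping of reflection angles.

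First I would observe that a vertical reflecting plane only permutes the horizontal coordinates $(x,z)$ and leaves the height $y$ untouched. Hence the mirror image $\mathbf{x}_{\mathsf{a}}^{(\ell)}$ of the real point $\mathbf{x}_{\mathsf{a}}$ satisfies $y_{\mathsf{a}}^{(\ell)}=y_{\mathsf{a}}$ for every $\ell$, which already accounts for the middle entry of \eqref{representXA} (either VP value returns the same height). By the defining property of a mirror image, the segment joining $\mathbf{x}_{\mathsf{a}}$ to $\mathbf{x}_{\mathsf{a}}^{(\ell)}$ is orthogonal to the reflecting plane; since that plane is vertical, this segment lies in a horizontal plane and, projected onto the $XZ$-plane, is the straight line through $\mathbf{x}_{\mathsf{a}}^{(\ell)}$ with directed slope $\tan(\theta_\ell)$. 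This gives the line equation $z_{\mathsf{a}}=z_{\mathsf{a}}^{(\ell)}+\tan(\theta_\ell)\bigl(x_{\mathsf{a}}-x_{\mathsf{a}}^{(\ell)}\bigr)$, which is exactly the third entry of \eqref{representXA} for $\ell=\ell_1$. Writing this line for both $\ell_1$ and $\ell_2$ and using that both pass through the common real point $\mathbf{x}_{\mathsf{a}}$, I would eliminate $z_{\mathsf{a}}$ and solve the resulting scalar equation for $x_{\mathsf{a}}$; the difference of the two slopes $\tan(\theta_{\ell_1})$ and $\tan(\theta_{\ell_2})$ enters the denominator, yielding the first entry of \eqref{representXA}, and back-substitution into either line recovers $z_{\mathsf{a}}$. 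The identical argument with $\mathsf{b}$ in place of $\mathsf{a}$ gives the stated relation for $\mathbf{x}_{\mathsf{b}}$.

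For part (2) I would introduce the auxiliary angle $\psi_\ell$ that the trace of the $\ell$-th reflecting plane makes with the $X$-axis in the horizontal plane. Two planar reflection facts then suffice. Because the RP-to-VP segment is orthogonal to the mirror, its direction $\theta_\ell$ differs from $\psi_\ell$ by a constant right angle, so the offset cancels under subtraction, giving $\theta_{\ell_1}-\theta_{\ell_2}=\psi_{\ell_1}-\psi_{\ell_2}$. Because reflecting across a line at angle $\psi_\ell$ sends a direction at angle $\gamma$ to one at angle $2\psi_\ell-\gamma$, and because the real segment $\mathbf{x}_{\mathsf{a}}\mathbf{x}_{\mathsf{b}}$ has one and the same orientation $\gamma$ for every $\ell$, the VP-segment orientation obeys $\phi_\ell=2\psi_\ell-\gamma$; subtracting over $\ell_1,\ell_2$ removes $\gamma$ and gives $\phi_{\ell_1}-\phi_{\ell_2}=2(\psi_{\ell_1}-\psi_{\ell_2})$. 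Combining the two displays yields \eqref{angleRelation}.

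The main obstacle I anticipate is not the algebra but pinning down the reflection-angle relations cleanly---specifically verifying that the common RP orientation $\gamma$ and the fixed right-angle offset between $\theta_\ell$ and $\psi_\ell$ are genuinely independent of the particular mirror $\ell$, so that both cancel under subtraction. This independence is precisely what the verticality assumption guarantees: because every plane is vertical, each reflection acts as the same type of planar isometry on the horizontal projection, letting me treat $\gamma$ as a single unknown shared by all VPs. Care is also needed with the directed-angle conventions (signs of $\theta_\ell$, $\phi_\ell$, $\psi_\ell$) to ensure the factor of two in \eqref{angleRelation} appears with the correct sign rather than its negative.
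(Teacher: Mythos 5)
Your proof is correct and follows essentially the same route as the paper's: part (1) is the intersection of the two lines $z = z_{\mathsf{a}}^{(\ell_i)} + \tan(\theta_{\ell_i})\left(x - x_{\mathsf{a}}^{(\ell_i)}\right)$, and part (2) rests on the reflection identity that $\phi_\ell - 2\theta_\ell$ is independent of $\ell$, which the paper reads directly off its figure as $\pi - \varphi = \phi_{\ell_1} - 2\theta_{\ell_1} = \phi_{\ell_2} - 2\theta_{\ell_2}$ and which you derive more explicitly via the mirror-trace angle $\psi_\ell$. Incidentally, your denominator $\tan(\theta_{\ell_2}) - \tan(\theta_{\ell_1})$ agrees with the paper's own appendix derivation, confirming that the denominator $\theta_{\ell_2} - \theta_{\ell_1}$ printed in the lemma's display is a typo.
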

		\begin{proof}
			See Appendix B. 
		\end{proof}
		
		Some intuitions are made from Lemma \ref{CommonPointRelation}. First, it is shown in \eqref{representXA} that the coordinates of  $\left(\mathbf{x}_{\mathsf{a}},\mathbf{x}_{\mathsf{b}}\right)$ can be calculated when $\theta_{\ell_1}$ and $\theta_{\ell_2}$ are given. Second, noting that $\phi_{\ell_1}$ and $\phi_{\ell_2}$ are observable from VPs $\ell_1$ and $\ell_2$ directly, 
		$\theta_{\ell_2}$ is easily calculated when $\theta_{\ell_1}$ is given. Last, $\theta_{\ell_1}$ and the resultant $\left(\mathbf{x}_{\mathsf{a}},\mathbf{x}_{\mathsf{b}}\right)$ are said to be correct if another combination of two VPs, for example $\ell_1$ and $\ell_3$, can yield the equivalent result of $\left(\mathbf{x}_{\mathsf{a}},\mathbf{x}_{\mathsf{b}}\right)$. 
		As a result, we can lead to the following feasibility condition. 
		
		\begin{figure}
			\centering
			\includegraphics[scale = 0.35]{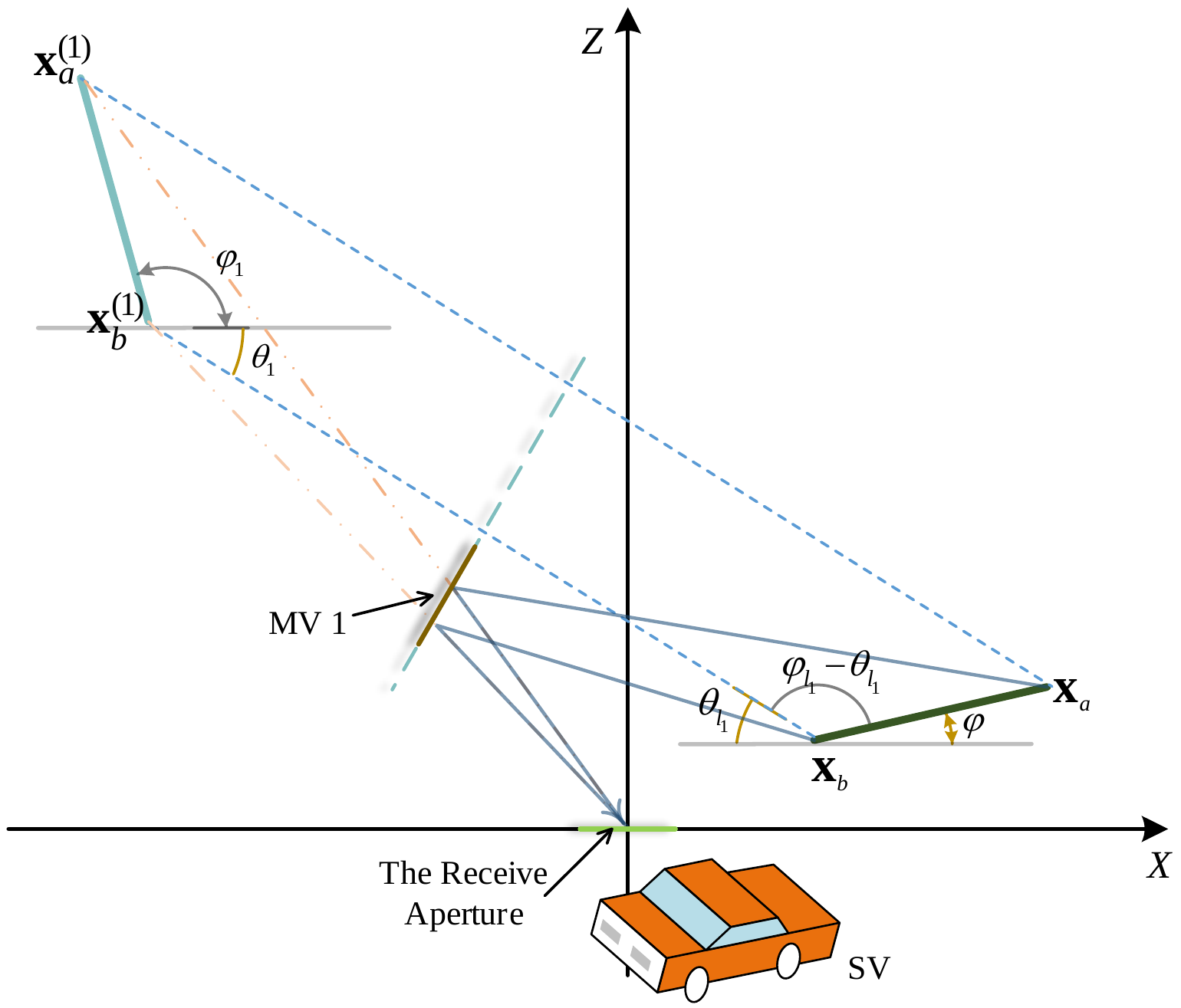}
			\caption{Relations between VPs and the TV.}\label{location}
		\end{figure}
		
		Based on Lemma \ref{CommonPointRelation} and Proposition \ref{Prop2}, the representative points $\left(\mathbf{x}_{\mathsf{a}},\mathbf{x}_{\mathsf{b}}\right)$ are estimated as follows. 
		Consider the angle $\theta_{1}$ is given. According to \eqref{angleRelation}, the other angles $\{\theta_{\ell}\}_{\ell=2}^L$ is then expressed in terms of $\theta_{1}$ as $\theta_{\ell}=\theta_{1}+\frac{\phi_{\ell}-\phi_{1}}{2}$. By plugging one pair of $(\theta_{1}, \theta_{\ell})$ into \eqref{representXA}, we can calculate the corresponding location of two representative points $\mathsf{a}$ and $\mathsf{b}$, denoted by $({\mathbf{z}}_{\mathsf{a}}^{(\ell)}, {\mathbf{z}}_{\mathsf{b}}^{(\ell)})$, which is equal to $\left(\mathbf{x}_{\mathsf{a}},\mathbf{x}_{\mathsf{b}}\right)$ when the given angle $\theta_{1}$ is correct. In other words, estimating $\left(\mathbf{x}_{\mathsf{a}},\mathbf{x}_{\mathsf{b}}\right)$ is translated into finding $\theta_{1}$ minimizing the following squared Euclidean distance as
		\begin{align}\label{OptTheta}\tag{E5}
		\theta_{1}^*=\arg\min_{\theta_{1}}\sum\limits_{p=2}^{L}{\sum\limits_{q=2}^{L} {\left(\left\| {\mathbf{z}}_{\mathsf{a}}^{(q)} - {\mathbf{z}}_{\mathsf{a}}^{(p)} \right\|+\left\| {\mathbf{z}}_{\mathsf{b}}^{(q)} - {\mathbf{z}}_{\mathsf{b}}^{(p)} \right\|\right) }}
		\end{align}
		where ${\mathbf{x}}_{\mathsf{a}}^*=\frac{1}{L-1}\sum_{\ell=2}^L{\mathbf{z}}_{\mathsf{a}}^{(\ell)}$ and 
		${\mathbf{x}}_{\mathsf{b}}^*=\frac{1}{L-1}\sum_{\ell=2}^L{\mathbf{z}}_{\mathsf{b}}^{(\ell)}$. The optimal $\theta_{1}^*$ is estimated by 1D search over $ \left[ { - {\pi },{\pi }} \right]$, and the resultant $({\mathbf{x}}_{\mathsf{a}}^*, {\mathbf{x}}_{\mathsf{b}}^*)$ corresponds to the optimal $\left(\mathbf{x}_{\mathsf{a}},\mathbf{x}_{\mathsf{b}}\right)$. Note that in case without phase error, the optimal solution of the problem \ref{OptTheta} is zero and $({\mathbf{x}}_{\mathsf{a}}^*, {\mathbf{x}}_{\mathsf{b}}^*)=\left(\mathbf{x}_{\mathsf{a}},\mathbf{x}_{\mathsf{b}}\right)$.

		\begin{remark}[Existence of LoS path]{The LoS case is a special realization of NLoS case, where one couple of representative points $\left(\mathbf{x}_{\mathsf{a}}^{(0)},\mathbf{x}_{\mathsf{b}}^{(0)}\right)$ are equivalent to the exact location $\left(\mathbf{x}_{\mathsf{a}},\mathbf{x}_{\mathsf{b}}\right)$. Therefore, all mathematical expression for position mapping still holds in the LoS condition, and the resultant $\left(\mathbf{x}_{\mathsf{a}}^*,\mathbf{x}_{\mathsf{b}}^*\right)$ can be obtained in the same way. }	
		\end{remark}

		From the estimated $\theta_{\ell}^{*}$ and ${\mathbf{x}}_{\mathsf{a}}^*$ (or ${\mathbf{x}}_{\mathsf{b}}^*$), the real position $\{\mathsf{I}_{\mathbf{x}}\}$ can be obtained by shifting the MI $\ell$, $\{\mathsf{I}_{\mathbf{x}^{(\ell)}}\}$. The reflection plane $\ell$, which is located on the middle of ${\mathbf{x}}_{\mathsf{a}}^*$ and ${\mathbf{x}}_{\mathsf{a}^{(\ell)}}$, can be expressed by a line because it is perpendicular to $X-Y$ plane such that
		\begin{align}\label{reflectionPlane}
		z = - \frac{1}{{\tan ({\theta_{\ell}^*})}}\left(x- \frac{{x_{\mathsf{a}}^{(\ell)}} + {x_{\mathsf{a}}^*}}{2} \right)+\frac{\left( {{z_{\mathsf{a}}^{(\ell)}}+{z_{\mathsf{a}}^*}} \right)}{2},
		\end{align}
		where $\mathbf{x}_{\mathsf{a}}^*=({x}_{\mathsf{a}}^*, {y}_{\mathsf{a}}^*, {z}_{\mathsf{a}}^*)$. 
		It is worth noting that the  line \eqref{reflectionPlane} passes the middle points of all lines between the real points $\mathbf{x}$ and virtual points $\mathbf{x}^{(\ell)}$, leading to the following mapping rule. 
		\begin{prop}[Position Mapping]\label{Prop3} \emph{Consider the VP $\ell$ represented by $\{\mathsf{I}_{\mathbf{x}^{(\ell)}}\}$. Given $\theta_{\ell}$ and ${\mathbf{x}}_{\mathsf{a}}^*$ (or ${\mathbf{x}}_{\mathsf{b}}^*$), the RP of the TV $\{\mathsf{I}_{\mathbf{x}^{(\ell)}}\}$ can be obtained by the following mapping function:
				\begin{align}\label{FindLocation}
				\mathsf{I}_{\mathbf{x}}=\mathsf{I}_{\mathsf{G}(\mathbf{x}^{(\ell)})},\quad\mathsf{G}(\mathbf{x}^{(\ell)})=\left(\begin{aligned}
				& x^*\\
				& y^*\\
				& z^*
				\end{aligned}\right)
				\end{align}
				where
				\begin{align}
				\left\{\!\!\! \begin{array}{l}
				x^*\!=\!{x^{(\ell )}} \!+\!\frac{\left( {1 + {{\tan }^2}(\theta _\ell ^*)} \right)}{{\tan (\theta _\ell ^*)}}\!\left(\! {\frac{{x_{\mathsf{a}}^{(\ell )} + x_{\mathsf{a}}^*}}{{\tan (\theta _\ell ^*)}} \!+\! z_{\mathsf{a}}^{(\ell )} \!\!+\! z_{\mathsf{a}}^* \!-\! \frac{{2{x^{(\ell )}}}}{{\tan (\theta _\ell ^*)}} \!-\! 2{z^{(\ell )}}} \right)\\
				y^*\!=\!y^{(\ell)}\\
				z^*\!=\!{z^{(\ell )}} \!+\! \left( {1 \!+\! {{\tan }^2}(\theta _\ell ^*)} \right)\!\left(\! {\frac{{x_{\mathsf{a}}^{(\ell )} + x_{\mathsf{a}}^*-{2{x^{(\ell )}}}}}{{\tan (\theta _\ell ^*)}} \!+\! z_{\mathsf{a}}^{(\ell )} \!+\! z_{\mathsf{a}}^* \!-\! 2{z^{(\ell )}}} \!\right)\end{array}. \right.\nonumber
				\end{align}
			}
		\end{prop}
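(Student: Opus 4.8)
The plan is to recognize $\mathsf{G}$ as the orthogonal reflection across the plane \eqref{reflectionPlane} and then to make that reflection explicit in coordinates. The physical justification comes first: by the mirror-image principle of specular reflection, each real scatterer on the TV and the virtual scatterer that the SV effectively ``sees'' through MV $\ell$ are symmetric about the reflecting surface. The discussion preceding the proposition already identifies \eqref{reflectionPlane} as the surface passing through the midpoint of every segment joining a real point $\mathbf{x}$ to its image $\mathbf{x}^{(\ell)}$, and Lemma \ref{CommonPointRelation} fixes the direction of that segment through $\theta_\ell^*$, so the surface is perpendicular to it. Hence $\mathsf{G}$ must be exactly the orthogonal reflection across \eqref{reflectionPlane}; since reflection is an involution, writing $\mathsf{I}_{\mathbf{x}} = \mathsf{I}_{\mathsf{G}(\mathbf{x}^{(\ell)})}$ recovers the real image from the virtual one point by point, and the whole proof reduces to computing this map.

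Second, I would exploit the verticality assumption. Because the reflecting surface is perpendicular to the ground, i.e.\ parallel to the $Y$-axis, the reflection acts trivially along $Y$ and nontrivially only within the $X$--$Z$ plane. This immediately yields $y^* = y^{(\ell)}$ and collapses the problem to reflecting the planar point $(x^{(\ell)}, z^{(\ell)})$ across the \emph{line} \eqref{reflectionPlane}.

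Third, I would carry out the planar reflection algebraically. Rewrite \eqref{reflectionPlane} in normal form $a x + b z + c = 0$ with $a = 1$, $b = \tan(\theta_\ell^*)$, and $c = -\tfrac12(x_{\mathsf{a}}^{(\ell)} + x_{\mathsf{a}}^*) - \tfrac12(z_{\mathsf{a}}^{(\ell)} + z_{\mathsf{a}}^*)\tan(\theta_\ell^*)$, so that $a^2 + b^2 = 1 + \tan^2(\theta_\ell^*)$. Applying the standard point--line reflection
\[
(x^*, z^*) = (x^{(\ell)}, z^{(\ell)}) - \frac{2\left(a x^{(\ell)} + b z^{(\ell)} + c\right)}{a^2 + b^2}\,(a, b)
\]
and substituting these coefficients gives $x^*$ and $z^*$ purely in terms of $x^{(\ell)}, z^{(\ell)}$, the representative coordinates, and $\tan(\theta_\ell^*)$, which I would then rearrange into the closed form displayed in the proposition.

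The main obstacle is entirely bookkeeping: collecting the $\tan(\theta_\ell^*)$ terms in the numerator and denominator and grouping them into the exact stated prefactors. I would track the trigonometric factors carefully (using $1 + \tan^2 = \sec^2$ to simplify $a^2+b^2$) and guard against a scaling slip in the correction term. As a consistency check before declaring the simplification finished, I would verify the map on the representative point itself, namely that $\mathsf{G}(\mathbf{x}_{\mathsf{a}}^{(\ell)})$ returns $\mathbf{x}_{\mathsf{a}}^*$ and that the midpoint of $\mathbf{x}^{(\ell)}$ and $\mathsf{G}(\mathbf{x}^{(\ell)})$ lies on \eqref{reflectionPlane}; this pins down the prefactors and confirms $\mathsf{G}$ is a genuine isometric reflection.
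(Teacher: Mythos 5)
Your plan is methodologically the same as the paper's own proof in Appendix D: there, too, $\mathsf{G}$ is computed as the planar reflection across the line \eqref{reflectionPlane}, by writing the real point as the virtual point plus a displacement of slope $\tan(\theta_\ell)$ (your perpendicularity condition, since \eqref{reflectionPlane} has slope $-1/\tan(\theta_\ell)$) and fixing the displacement's magnitude by forcing the midpoint onto \eqref{reflectionPlane} (your midpoint condition); the reduction $y^*=y^{(\ell)}$ from verticality is also common to both. Your normal-form reflection formula simply packages those two constraints in one step, so in substance there is no difference of route.

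However, the last step of your plan---``rearrange into the closed form displayed in the proposition''---cannot be completed, and the obstruction is not on your side. Carrying your formula through with $a=1$, $b=\tan(\theta_\ell^*)$, $a^2+b^2=1+\tan^2(\theta_\ell^*)$ gives
\begin{align}
x^* &= x^{(\ell)} + \frac{\tan(\theta_\ell^*)}{1+\tan^2(\theta_\ell^*)}\left(\frac{x_{\mathsf{a}}^{(\ell)}+x_{\mathsf{a}}^*-2x^{(\ell)}}{\tan(\theta_\ell^*)} + z_{\mathsf{a}}^{(\ell)}+z_{\mathsf{a}}^*-2z^{(\ell)}\right),\nonumber\\
z^* &= z^{(\ell)} + \frac{\tan^2(\theta_\ell^*)}{1+\tan^2(\theta_\ell^*)}\left(\frac{x_{\mathsf{a}}^{(\ell)}+x_{\mathsf{a}}^*-2x^{(\ell)}}{\tan(\theta_\ell^*)} + z_{\mathsf{a}}^{(\ell)}+z_{\mathsf{a}}^*-2z^{(\ell)}\right),\nonumber
\end{align}
whose prefactors are the reciprocals of the ones printed in \eqref{FindLocation}. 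This is a slip in the paper, not in your method: solving the paper's own midpoint equation \eqref{Deltax} for $\Delta x^{(\ell)}$ yields a coefficient of the form $\tan(\theta_\ell)/\left(1+\tan^2(\theta_\ell)\right)$, whereas \eqref{DeriveDeltax} reports the inverted coefficient $\left(1+\tan^2(\theta_\ell)\right)/\tan(\theta_\ell)$ (and \eqref{Deltax} itself drops factors of $2$ on $x^{(\ell)}$ and $z^{(\ell)}$ relative to the true midpoint condition); that inversion then propagates into the statement of Proposition \ref{Prop3}. Your proposed consistency checks are exactly what exposes this: with the printed prefactors the displacement from $\mathbf{x}^{(\ell)}$ has the correct direction but its magnitude is scaled by $\left(1+\tan^2(\theta_\ell^*)\right)^2/\tan^2(\theta_\ell^*)$, so the midpoint of $\mathbf{x}^{(\ell)}$ and $\mathsf{G}(\mathbf{x}^{(\ell)})$ does not lie on \eqref{reflectionPlane}, $\mathsf{G}$ is not an involution, and $\mathsf{G}(\mathbf{x}_{\mathsf{a}}^{(\ell)})\neq\mathbf{x}_{\mathsf{a}}^*$, while your version passes all three tests. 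So keep your derivation and your checks, and conclude with the corrected prefactors rather than forcing agreement with the displayed formula.
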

		\begin{proof}
			See Appendix D. 
		\end{proof}


		\section{Simulation Results}
		
		In this section, the performance of the proposed multi-point vehicular positioning techniques are evaluated by realistic settings. 
		SW at $2$ frequencies are used for the synchronization procedure. 
		The number of frequencies used in the SFCW $\boldsymbol{s}(t)$ \eqref{transmitSig} is $K=512$ in $57\sim60$ GHz with the constant gap $\Delta = 5.86$ MHz. The two frequencies used in the SWs $\boldsymbol{s}_\mathsf{a}(t)$ are $(57-\Delta \cdot i)$ GHz where $i=1,2$. 
		The numbers of the TV's  antennas $N$ is $200$ and the number of SV's  antennas $M$ is $256$, uniformly deployed on the receive aperture with size $1 \times 1$ $m^2$. SNR of each received signal is fixed to $10$ dB.
		For the performance metric, we use the Hausdorff distances defined as follows. 
		\begin{definition}[Hausdorff distance~\cite{Hausdorff}] Consider two images $\mathcal{A}$ and $\mathcal{B}$. The Hausdorff distance is defined as 
			\begin{align}\label{Hausdorff_Def}
			\mathsf{H}\left( {\mathcal{A},\mathcal{B}} \right) = \max \left( {\mathsf{h}\left( {\mathcal{A},\mathcal{B}} \right), \mathsf{h}\left( {\mathcal{B},\mathcal{A}} \right)} \right),
			\end{align} 
			where $\mathsf{h}\left( {\mathcal{A},\mathcal{B}} \right) = \mathop {\max }\limits_{a \in \mathcal{A}} \mathop {\min }\limits_{b \in \mathcal{B}} \left\| {a - b} \right\|$ is the direct Hausdorff distance from $\mathcal{B}$ to $\mathcal{A}$. 
		\end{definition}

		\begin{figure}
			\centering
			\includegraphics[scale = 0.22]{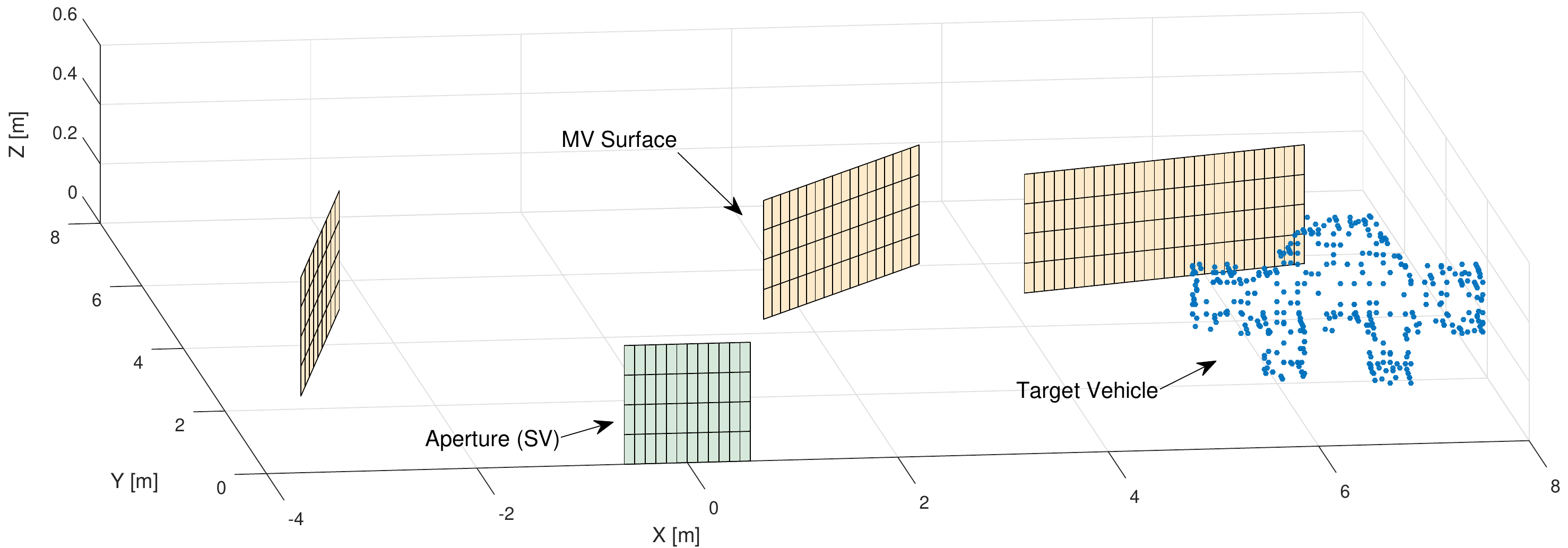}
			\caption{Initial topology.}\label{InitialSetting}
		\end{figure}
		\begin{figure}
			\centering
			\includegraphics[scale = 0.23]{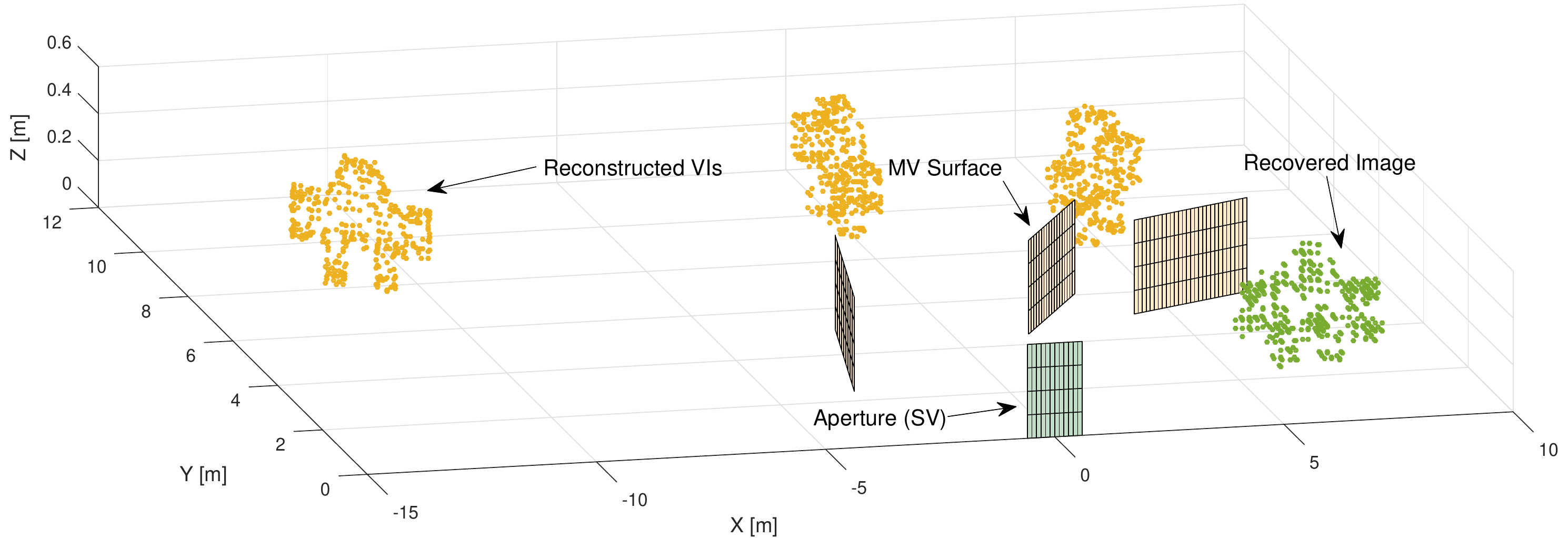}
			\caption{The detection of the VPs.}\label{Reconstruct1st}
		\end{figure}
		
		\subsection{Graphical Example of Multi-Point Positioning}
		
		This subsection aims at explaining the entire vehicular positioning procedure with graphical examples. Fig. \ref{InitialSetting}(a) shows the original shape of the TV with the size $3\times1\times0.6$ $m^3$. We consider the topology with three MVs illustrated in Fig. \ref{InitialSetting}(b). 
		The $X$, $Y$, $Z$-axes are the SV's moving direction, height and depth, respectively.
		The TV is represented by discrete points, each of which is  one TV's  antenna.
		The receive aperture is parallel to $Y$-axis and located on $(0, 0, 0)$.
		The equations of three MVs are $z = 1.02x + 3$, $z = \frac{{x + 13}}{4}$, and $z = 3x + 4$.

		Using the MVs, the SV detects three VPs represented by yellow slots in Fig. \ref{Reconstruct1st}, each of which is  differentiable using AoA information. By the mapping algorithm in Proposition \ref{Prop3}, each VP can be shifted to its real location represented by green spots with Hausdorff distance and direct Hausdorff distance $0.355$m and $0.143$ m respectively, relatively small compared to the size of the TV. The final detected position is obtained as in Fig. \ref{EnvelopCompare}(b), similar to the original one in Fig. \ref{EnvelopCompare}(a).


		
		\begin{figure}[!htpb]
			\begin{minipage}[t]{0.45\linewidth}
				\centering
				\includegraphics[scale = 0.25]{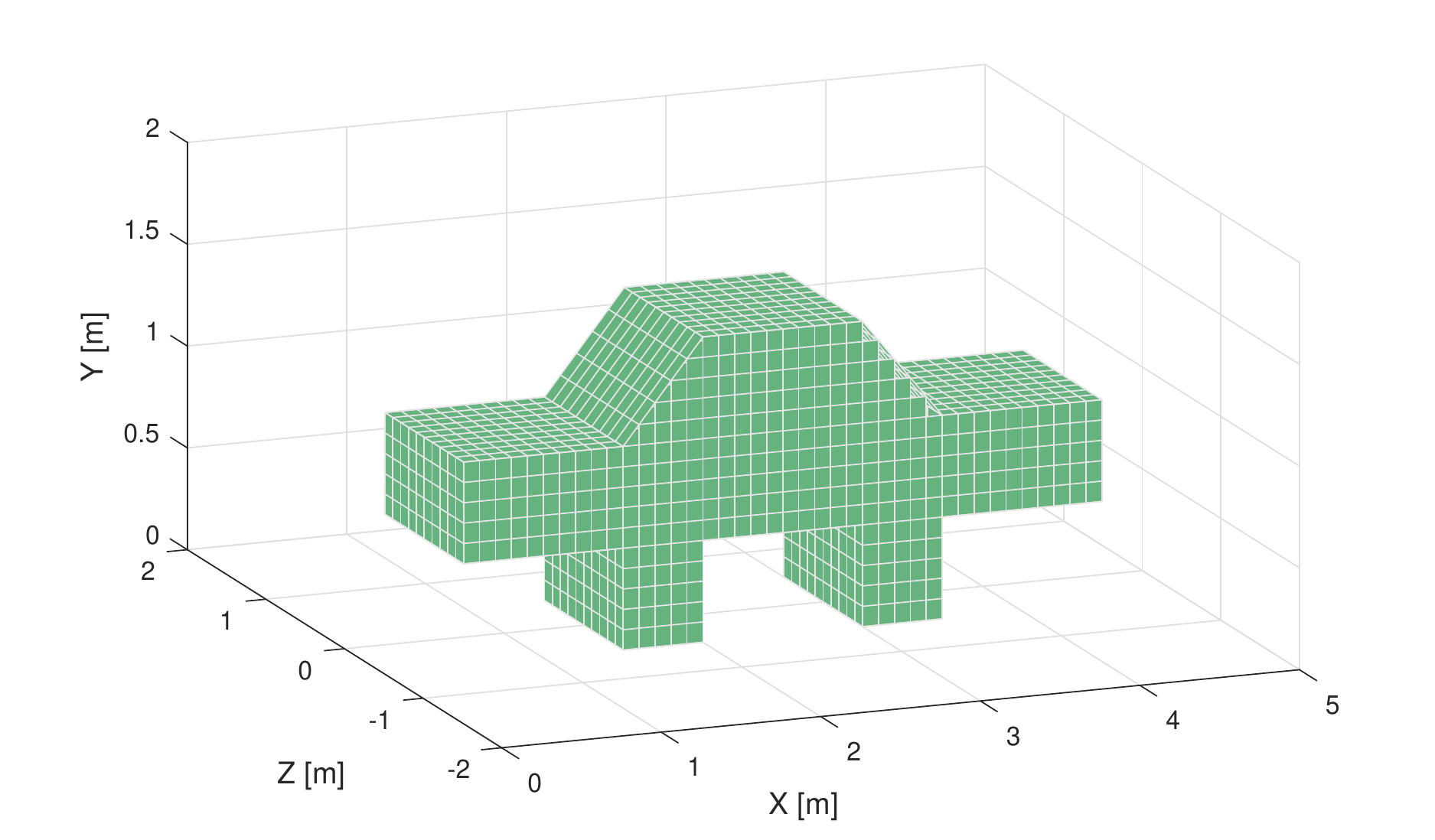}
				\caption*{(a) Envelop Diagram of TV.}
				\label{fig:side:a}
			\end{minipage}
			\begin{minipage}[t]{0.45\linewidth}
				\centering
				\includegraphics[scale = 0.33]{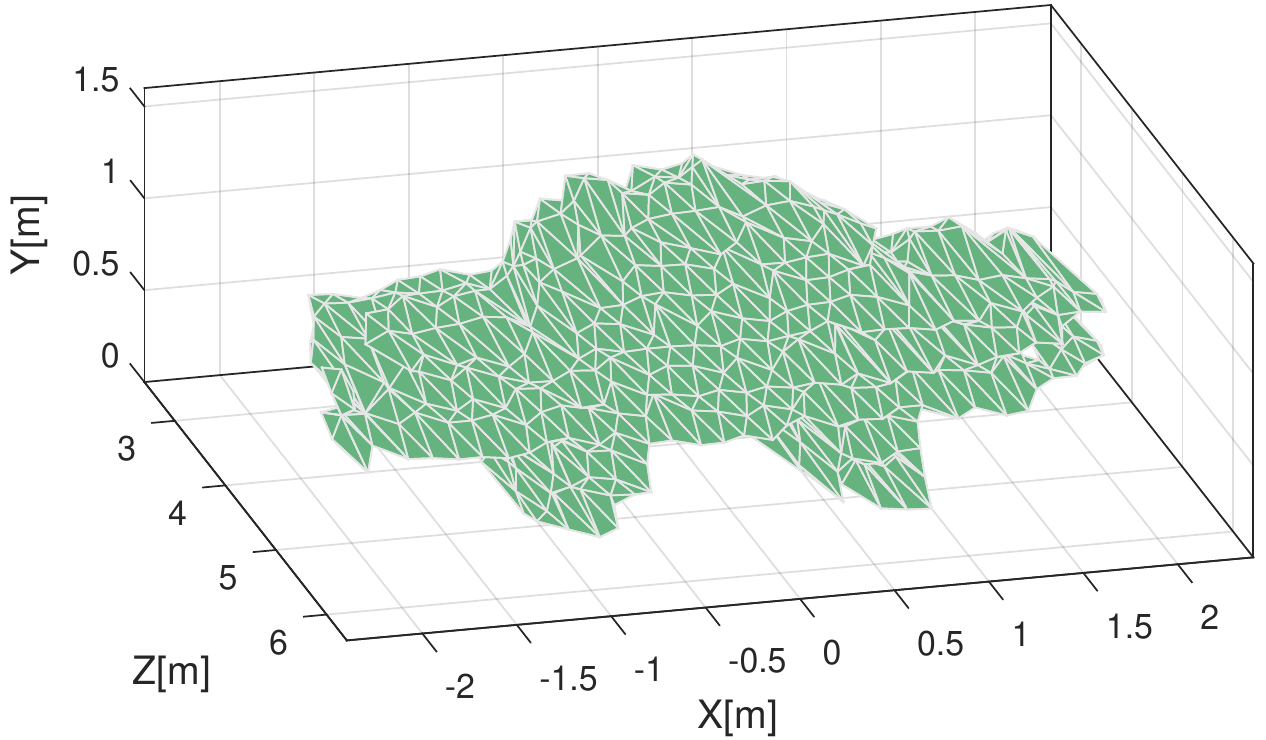}
				\caption*{(b) Envelop Diagram of the reconstructed RP.}
				\label{fig:side:b}
			\end{minipage}
			\caption{The envelop digaram comparison of the TV model and the detected position.}\label{EnvelopCompare}
		\end{figure}
		
		\subsection{Effect of Distance between SV and TV}
		In Fig.~\ref{HausdorffDis}, the Hausdorff distances are given under different TV-SV distances, showing that the positioning quality is degraded as the distance increases.  This phenomenon can be explained by \eqref{spatialResolution}, where the spatial resolution becomes poor when the detection range $R$ is large. Thus the SV may not be able to capture the clear position of the TVs far away, especially when the number of MVs is small. Moreover, the relation between the Hausdorff distance and the TV-SV distance is nonlinear. With noise in consideration, the Hausdorff distances increase faster when the TV-SV distance becomes larger.
		\begin{figure}[!htpb]
			\centering
			\includegraphics[
			width=180pt]{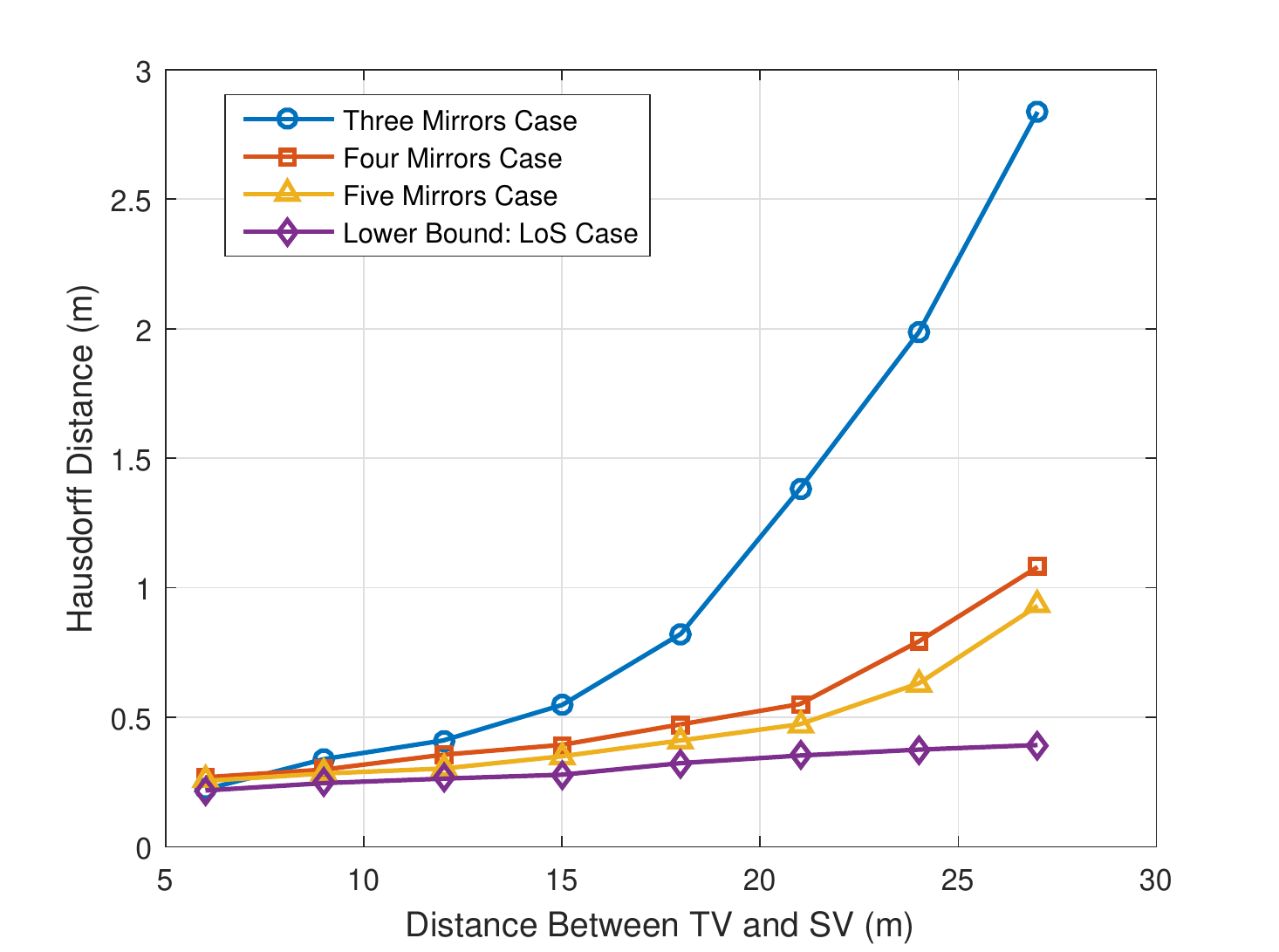}
			\caption{The performance of multi-point poditioning versus distance.}\label{HausdorffDis}
		\end{figure}
		\subsection{Effect of Mirror Vehicle Number}
		The relation between the performance and the number of MVs is also presented in Fig.~\ref{HausdorffDis}. Signals reflected from any three MVs give one estimation of $\theta_1$, but the resultant positioning quality is low due to phase error. Larger $L$ provides more combinations to estimate $\theta_1$, providing more accurate estimation of $\theta_1$ by canceling out individual estimated error. 
		The LoS case is considered as a lower bound because the operation in Sec. \ref{sec2:Common-point} is not involved.

		\section{Conclusion}
		In this paper, a novel multi-point vehicular positioning approach via mmWave signal transmissions has been proposed to capture the shape and location information of the TVs in both LoS and NLoS. The synchronization issue has been well addressed by a novel PDoA-based hyperbolic positioning approach. The predictable reflections of mmWave frequency on the surface of vehicles have been exploited to form the geometry relation for obtaining the RP from multiple VPs.
		Existing vehicular positioning techniques relies on the LoS path while the proposed techniques overcome such limitations and opens a new area of mmWave-based vehicular positioning. We believe the proposed technique enables more intelligent and safer autonomous driving and the potential of mmWave-based sensing can still be activated in the future.

		\appendix
		\subsection{Proof of proposition \ref{Prop:sync}}\label{A}
		Due to the assumption of Gaussian phase error, the covariance of the location estimation can be expressed as
		\begin{align}\label{Ap_Cov}
		{\rm{cov}}({\bf{\tilde x}}_{\mathsf a}) 
		=& \bf X {\sigma _z^2{{\bf{I}}_3}}, 
		\end{align}
		where $\bf X$ is a $3$-by-$3$ matrix as 
		\begin{align}\label{AP_EXP}
		\bf X\!&= {\left( {{\bf{G}}{{({\bf{\tilde x}}_{\mathsf {a}})}^T}{\bf{G}}({\bf{\tilde x}}_{\mathsf {a}})} \right)^{ - 1}} = \left(\sum_{m=2}^M T_{\mathsf m}({\bf{\tilde x}}_{\mathsf {a}})\right)^{ \!\!\!- 1}\nonumber\\
		&=\!\!
		\left(\!\!\sum_{m=2}^M\!\! \left[\!\!\!\! {\begin{array}{*{20}{c}}
			{{{{\left( {\frac{{\partial {\mathsf{F_m}}({\bf{\tilde x}}_{\mathsf {a}})}}{{\partial x_{\mathsf {a}}}}} \right)}^2}} }\!\!&\!\!{ {\frac{{\partial {\mathsf{F_m}}({\bf{\tilde x}}_{\mathsf {a}})}}{{\partial x_{\mathsf {a}}}}\frac{{\partial {\mathsf{F_m}}({\bf{\tilde x}}_{\mathsf {a}})}}{{\partial y_{\mathsf {a}}}}} }\!\!&\!\!{ {\frac{{\partial {\mathsf{F_m}}({\bf{\tilde x}}_{\mathsf {a}})}}{{\partial x_{\mathsf {a}}}}\frac{{\partial {\mathsf{F_m}}({\bf{\tilde x}}_{\mathsf {a}})}}{{\partial z_{\mathsf {a}}}}} }\\
			{ {\frac{{\partial {\mathsf{F_m}}({\bf{\tilde x}}_{\mathsf {a}})}}{{\partial x_{\mathsf {a}}}}\frac{{\partial {\mathsf{F_m}}({\bf{\tilde x}}_{\mathsf {a}})}}{{\partial y_{\mathsf {a}}}}} }\!\!&\!\!{ {{{\left( {\frac{{\partial {\mathsf{F_m}}({\bf{\tilde x}}_{\mathsf {a}})}}{{\partial y_{\mathsf {a}}}}} \right)}^2}} }\!\!&\!\!{ {\frac{{\partial {\mathsf{F_m}}({\bf{\tilde x}}_{\mathsf {a}})}}{{\partial y_{\mathsf {a}}}}\frac{{\partial {\mathsf{F_m}}({\bf{\tilde x}}_{\mathsf {a}})}}{{\partial z_{\mathsf {a}}}}} }\\
			{ {\frac{{\partial {\mathsf{F_m}}({\bf{\tilde x}}_{\mathsf {a}})}}{{\partial x_{\mathsf {a}}}}\frac{{\partial {\mathsf{F_m}}({\bf{\tilde x}}_{\mathsf {a}})}}{{\partial z_{\mathsf {a}}}}} }\!\!&\!\!{ {\frac{{\partial {\mathsf{F_m}}({\bf{\tilde x}}_{\mathsf {a}})}}{{\partial y_{\mathsf {a}}}}\frac{{\partial {\mathsf{F_m}}({\bf{\tilde x}}_{\mathsf {a}})}}{{\partial z_{\mathsf {a}}}}} }\!\!&\!\!{ {{{\left( {\frac{{\partial {\mathsf{F_m}}({\bf{\tilde x}}_{\mathsf {a}})}}{{\partial z_{\mathsf {a}}}}} \right)}^2}} }
			\end{array}} \!\!\!\right]\!\!\right)^{ \!\!\!- 1}\nonumber\\
		&=\frac{1}{M-1}\left(\frac{1}{M-1}\sum_{m=2}^M T_{\mathsf m}({\bf{\tilde x}}_{\mathsf {a}})\right)^{ - 1},
		\end{align}
		where the inverse matrix's each component converges to its expectation as $M$ becomes large, which becomes independent to $M$. In other words, $\bf X$ is inversely proportional to $(M-1)$, completing the proof. 

		\subsection{Proof of Lemma \ref{CommonPointRelation} } \label{B}
		In Fig.~\ref{AngRelation}, we have $\theta^{(\ell_1)}  \in \left[ { - {\pi },{\pi }} \right]$ and the equations of line $l_{{\bf x}_{\mathsf{a}}^{(\ell_1)}{\bf x}_{\mathsf{a}}}$ and $l_{{\bf x}_{\mathsf{a}}^{(\ell_2)}{\bf x}_{\mathsf{a}}}$ are
		\begin{align}
		&l_{{\bf x}_{\mathsf{a}}^{(\ell_1)}{\bf x}_{\mathsf{a}}}:\quad{\rm{  }}z = {z_{\mathsf{a}}^{(\ell_1)}} + \tan {\theta ^{(\ell_1)}}\left( {x - {x_{\mathsf{a}}^{(\ell_1)}}} \right),\nonumber\\
		&l_{{\bf x}_{\mathsf{a}}^{(\ell_2)}{\bf x}_{\mathsf{a}}}:\quad{\rm{  }}z = {z_{\mathsf{a}}^{(\ell_2)}} + \tan {\theta ^{(\ell_2)}}\left( {x - {x_{\mathsf{a}}^{(\ell_2)}}} \right).\nonumber
		\end{align}
		Thus we have 
		\begin{align}\label{ApprepresentXA}
		{x_{\mathsf{a}}} = \frac{{\left( {{z_{\mathsf{a}}^{(\ell_1)}} - {z_{\mathsf{a}}^{(\ell_2)}}} \right) + \left( {{x_{\mathsf{a}}^{(\ell_2)}}\tan {\theta ^{(\ell_2)}} - {x_{\mathsf{a}}^{(\ell_1)}}\tan {\theta ^{(\ell_1)}}} \right)}}{{\tan {\theta ^{(\ell_2)}} - \tan {\theta ^{(\ell_1)}}}},
		\end{align}
		and
		\begin{align}
		z_{\mathsf{a}} = {z_{\mathsf{a}}^{(\ell_1)}} + \tan {\theta^{(\ell_1)}}\left( {x_{\mathsf{a}} - {x_{\mathsf{a}}^{(\ell_1)}}} \right).
		\end{align}
		The general relations in \eqref{representXA} can be derived similarly.
		Due to the symmetric relation between the VPs and the TV,  we have
		\begin{align}\label{AppAngRelation}
		\pi-\varphi {\rm{ = }}{\varphi ^{(\ell_1)}} - 2{\theta ^{(\ell_1)}} = {\varphi ^{(\ell_2)}} - 2{\theta ^{(\ell_2)}},
		\end{align}
		which can be observed from Fig.~\ref{AngRelation}. Thus \eqref{angleRelation} is obtained. 
		\begin{figure}
			\centering
			\includegraphics[scale = 0.4]{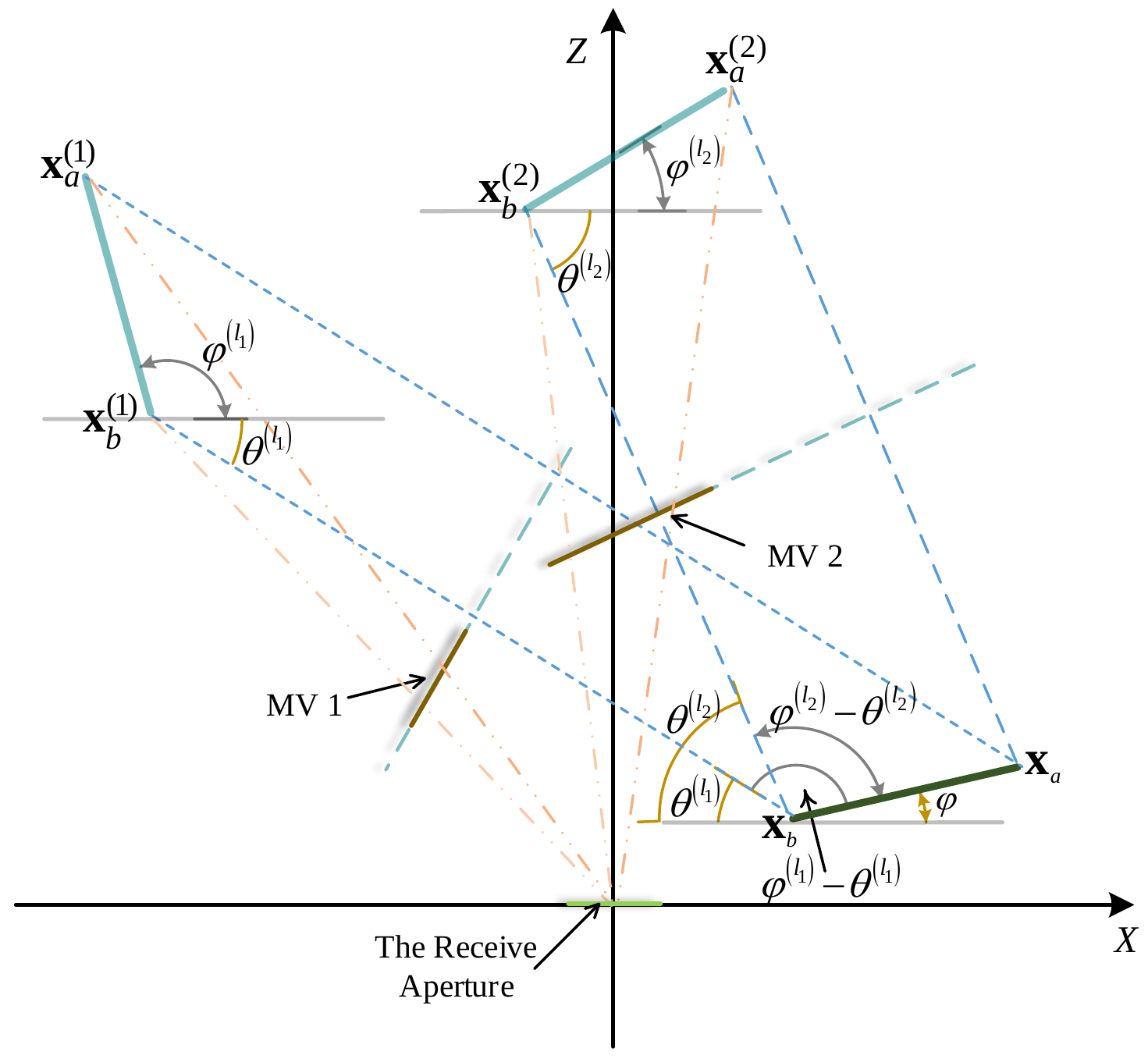}
			\caption{The geometry relations among different VPs and the TV.}\label{AngRelation}
		\end{figure}

		\subsection{Proof of Proposition \ref{Prop3} } \label{D}
		Based on the line function of MV $\ell$ surface \eqref{reflectionPlane}, as well as the symmetric geometry relation between VP $\ell$ and the TV, it is easy to establish the mathematical relation between 
		$\mathbf{x}^{(\ell)}$ and $\mathbf{x}$ as
		\begin{align}\label{locRule}
		\left\{ \begin{array}{l}
		x = {x^{(\ell )}} + \Delta {x^{(\ell )}}\\
		z = {z^{(\ell )}} + \tan ({\theta _\ell }) \cdot \Delta {x^{(\ell )}}
		\end{array} \right.,
		\end{align}
		where ${\Delta {x^{(\ell )}}}$ is the $x$-direction projection of distance between $(x,z)$ and $(x^{(\ell)},z^{(\ell)})$. The middle point of $(x,z)$ and $(x^{(\ell)},z^{(\ell)})$ locates at the MV $\ell$ surface with line function \eqref{reflectionPlane}. Thus the middle point $\left( {{x^{(\ell )}} + \frac{{\Delta {x^{(\ell )}}}}{2},{z^{(\ell )}} + \tan ({\theta _\ell }) \cdot \frac{{\Delta {x^{(\ell )}}}}{2}} \right)$ should satisfy function \eqref{reflectionPlane}. Thus ${\Delta {x^{(\ell )}}}$ can be solved by equation set expressed as
		\begin{align}\label{Deltax}
		z_{\mathsf{a}}^{(\ell )} \!\!+\! {z_{\mathsf{a}}} \!\!-\! \frac{\left(\! {{x^{(\ell )}} \!\!+\! \Delta {x^{(\ell )}} \!\!-\! x_{\mathsf{a}}^{(\ell )} \!\!-\! {x_{\mathsf{a}}}} \!\right)}{{\tan ({\theta _\ell })}}\!\!=\! {z^{(\ell )}} \!\!+\! \tan ({\theta _\ell }) \!\cdot\! \Delta {x^{(\ell )}}.
		\end{align}
		The result derived from \eqref{Deltax} is given as
		\begin{align}\label{DeriveDeltax}
		{\Delta {x^{(\ell )}}}\!=\!\left(\! {\frac{ {1 \!+\! {{\tan }^2}(\theta _\ell ^*)} }{{\tan (\theta _\ell ^*)}}} \!\right)\!\left(\!\! {\frac{{x_{\mathsf{a}}^{(\ell )} \!+\! x_{\mathsf{a}}^*\!-\!2{x^{(\ell )}}}}{{\tan (\theta _\ell ^*)}} \!+\! z_{\mathsf{a}}^{(\ell )} \!+\! z_{\mathsf{a}}^* \!-\! 2{z^{(\ell )}}} \!\!\right).
		\end{align}
		Bring \eqref{DeriveDeltax} into \eqref{locRule}, the result \eqref{FindLocation} can be obtained straightforward by replacing $(x,z)$ with the estimated value $(x^*,z^*)$. This finishes the proof.

		\bibliographystyle{ieeetr}%
		\bibliography{mirror}

	\end{document}